\newtheorem{theorem}{Theorem}
\newtheorem{assumption}{Assumption}
\newcommand{\cb}[1]{{\boldsymbol{#1}}}
\newcommand{\cp}[1]{\ifmmode {\mathcal{#1}}\else ${\mathcal{#1}}$\fi}
\newcommand{\balpha}{\boldsymbol{\alpha}}
\newcommand{\bbeta}{\boldsymbol{\beta}}
\newcommand{\bGamma}{\boldsymbol{\Gamma}}
\newcommand{\bdelta}{\boldsymbol{\delta}}
\newcommand{\bpsi}{\boldsymbol{\psi}}
\newcommand{\bphi}{\boldsymbol{\phi}}
\newcommand{\bPhi}{\boldsymbol{\Phi}}
\newcommand{\bphit}{\widetilde\bphi}
\newcommand{\bp}{\boldsymbol{p}}
\newcommand{\bg}{\boldsymbol{g}}
\newcommand{\bw}{\boldsymbol{w}}
\newcommand{\bx}{\boldsymbol{x}}
\newcommand{\bv}{\boldsymbol{v}}
\newcommand{\bz}{\boldsymbol{z}}
\newcommand{\bwt}{\widetilde\bw}
\newcommand{\bC}{\boldsymbol{C}}
\newcommand{\bG}{\boldsymbol{G}}
\newcommand{\br}{\boldsymbol{r}}
\newcommand{\bW}{\boldsymbol{W}}
\newcommand{\cAT}{\boldsymbol{\cal{A}}^{\top}}
\newcommand{\cA}{\boldsymbol{\cal{A}}}
\newcommand{\cX}{\boldsymbol{\cal{X}}}
\newcommand{\cC}{\boldsymbol{\cal{C}}}
\newcommand{\cCT}{\boldsymbol{\cal{C}}^\top}
\newcommand{\cM}{\boldsymbol{\cal{M}}}
\newcommand{\cB}{\boldsymbol{\cal{B}}}
\newcommand{\cR}{\boldsymbol{\cal{R}}}
\newcommand{\cF}{\boldsymbol{\cal{F}}}
\newcommand{\bA}{\boldsymbol{A}}
\newcommand{\bR}{\boldsymbol{R}}
\newcommand{\bX}{\boldsymbol{X}}
\newcommand{\bU}{\boldsymbol{U}}
\newcommand{\bSig}{\boldsymbol{\Sigma}}
\newcommand{\bsig}{\boldsymbol{\sigma}}
\newcommand{\bUps}{\boldsymbol{\Upsilon}}
\newcommand{\bI}{\boldsymbol{I}}
\newcommand{\cI}{\mathcal{I}}
\newcommand{\cO}{\mathcal{O}}
\newcommand{\expec}{\mathbb{E}}
\newcommand{\N}{{\cp{N}}}
\newcommand{\C}{{\cp{C}}}
\newcommand{\tr}{\text{Tr}}
\newcommand{\vc}{\text{vec}}
\newcommand{\col}{\text{col}}
\newcommand{\diag}{\text{diag}}
\newcommand{\prox}{\text{prox}}
\newcommand{\card}{\text{card}}
\DeclareMathOperator*{\argmin}{argmin}
\begin{document}
\title{Proximal Multitask Learning over Networks with Sparsity-inducing Coregularization}

\author{Roula Nassif, C{\'e}dric Richard, \IEEEmembership{Senior Member, IEEE} \\
Andr{\'e} Ferrari, \IEEEmembership{Member, IEEE}, Ali H. Sayed, \IEEEmembership{Fellow Member, IEEE}
\thanks{The work of C. Richard and A. Ferrari was partly supported by ANR and DGA grant ANR-13-ASTR-0030 (ODISSEE project). The work of A. H. Sayed was supported in part by NSF grants CIF-1524250 and ECCS-1407712.
R. Nassif, C. Richard, and A. Ferrari are with the Universit\'e C\^ote d'Azur, OCA, CNRS, France (email: roula.nassif@oca.eu; cedric.richard@unice.fr; andre.ferrari@unice.fr). C. Richard is on leave at INRIA Sophia Antipolis - M\'editerran\'ee, France.

A. H. Sayed is with the department of electrical engineering, University of California, Los Angeles, USA (email: sayed@ee.ucla.edu).
}
}

\maketitle
\begin{abstract}
In this work, we consider multitask learning problems where clusters of nodes are interested in estimating their own parameter vector. Cooperation among clusters is beneficial when the optimal models of adjacent clusters have a good number of similar entries. We propose a fully distributed algorithm for solving this problem. The approach relies on minimizing a global mean-square error criterion regularized by non-differentiable terms to promote cooperation among neighboring clusters. A general diffusion forward-backward splitting strategy is introduced. Then, it is specialized to the case of sparsity promoting regularizers. A closed-form expression for the proximal operator of a weighted sum of $\ell_1$-norms is derived to achieve higher efficiency. We also provide conditions on the step-sizes that ensure convergence of the algorithm in the mean and mean-square error sense. Simulations are conducted to illustrate the effectiveness of the strategy.
\end{abstract}


\maketitle

\section{Introduction}
We consider the problem of distributed adaptive learning over networks to simultaneously estimate several parameter vectors from noisy measurements using in-network processing. Depending on the number of parameter vectors to estimate, we distinguish between single-task networks and multitask networks. In a single-task scenario, the entire network aims to estimate a common parameter vector for all nodes. The nodes are allowed to exchange information with their neighbors to improve their own estimates. Then, the estimates are combined in order to achieve the solution of the problem. Different cooperation rules have been proposed and studied in the literature~\cite{Bertsekas1997,Nedic2001,Lopes2007incr,sayed2014adaptivenet,sayed2014adaptation,Sayed2013diff,sayed2014diffusion,lopes2008diffusion,Cativelli2010diffusion,ChenUCLA2012,chen2013pareto,dimakis2010gossip,nedic2009distributed,kar2011convergence,lee2013distributed,gharehshiran2013energyaware,ram2010distributed}. Diffusion strategies~\cite{sayed2014adaptivenet,sayed2014adaptation,Sayed2013diff,sayed2014diffusion,lopes2008diffusion,Cativelli2010diffusion,ChenUCLA2012,chen2013pareto} are particularly attractive since they are scalable, robust, and enable continuous learning and adaptation in response to concept drifts. They have also been shown to outperform consensus implementations over adaptive networks when constant step-sizes are employed to enable continuous adaptation~\cite{tu2012diffusion,sayed2014adaptivenet,sayed2014adaptation}.

In this work, we are interested in distributed estimation over multitask networks: nodes are grouped into clusters, and each cluster is interested in estimating its own parameter vector (i.e., each cluster has its own task). Although clusters may generally have distinct though related tasks to perform, the nodes may still be able to capitalize on inductive transfer between clusters to improve their estimation accuracy. Such situations occur when the tasks of nearby clusters are correlated, which happens, for instance, in monitoring applications where agents in a network need to track multiple targets moving along correlated trajectories. Multitask diffusion estimation problems of this type have been addressed before in two main ways. 

In a first scenario, no prior information on possible relationships between tasks is assumed and nodes do not know which other nodes share the same task. In this case, all nodes cooperate with each other as dictated by the network topology. It was shown in~\cite{chen2013pareto} that the diffusion iterates will end up converging to a Pareto optimal solution corresponding to a multi-objective optimization problem. If, on the other hand, the only available information is that clusters may exist in the network (but their structures are not known), then extended diffusion strategies can be developed~\cite{zhao2013clustering,Chen2014diffusionovermultitask,zhao2014distributedclustering,khawatmi2015decentralized} for setting the combination weights in an online manner in order to enable automatic network clustering and, subsequently, to limit cooperation between clustered agents. In a second scenario, it is assumed that nodes know which clusters they belong to. In this case, multitask diffusion strategies can be derived by exploiting this information on the relationships between tasks. A couple of useful works have addressed variations of this scenario. For example, in~\cite{abdolee2014estimation}, a diffusion LMS strategy estimates spatially-varying parameters by exploiting the spatio-temporal correlations of the measurements at neighboring nodes. In~\cite{plata2014distributed}, it is assumed that there are three types of parameters: parameters of global interest to all nodes in the network, parameters of common interest to a subset of nodes, and a collection of parameters of local interest. A diffusion strategy was developed to perform estimation under these conditions. {A similar work dealing with incremental strategies instead of diffusion strategies appears in~\cite{bodanovic2014distributed}. Likewise, in the works~\cite{bertrand2010NSPE1,bertrand2011NSPE2}, distributed algorithms are developed to estimate node-specific parameter vectors that lie in a common latent signal subspace. In another work~\cite{chen2014overlappinghypothesis}, the parameter space is decomposed into two orthogonal subspaces, with one of the subspaces being common to all nodes. There is yet another useful way to exploit and model relationships among tasks, namely, to formulate optimization problems with appropriate co-regularizers between nodes. The strategy developed in~\cite{chen2014multitask} adds squared $\ell_2$-norm co-regularizers to the mean-square-error criterion in order to promote smoothness of the graph signal. Its convergence behavior is studied over asynchronous networks in~\cite{nassif2014asynmulti}.

In some applications, however, such as cognitive radio~\cite{chen2014overlappinghypothesis,plata2014distributed} and remote sensing~\cite{chen2014multitask}, it may happen that the optimum parameter vectors of neighboring clusters have a large number of similar entries and a relatively small number of distinct components. In this work, we build on the second scenario where the composition of the clusters is assumed to be known and where nodes know which cluster they belong to. It is then advantageous to develop distributed strategies that involve cooperation among adjacent clusters in order to promote and exploit such similarity. Although the current problem seems to be related to the problem studied in~\cite{chen2014multitask}, it should be noted that the differentiable regularizers used in~\cite{chen2014multitask} are not effective when sparsity promoting regularization is required. Moreover, when neighboring nodes belonging to different clusters are aware of the indices of common and distinct entries, and when these indices are fixed over time, one may appeal to the multitask diffusion strategies developed in~\cite{plata2014distributed,chen2014overlappinghypothesis}. However, in the current work, we are interested in solutions that are able to handle situations where the only available information is that the optimum parameter vectors of neighboring clusters have a large number of similar entries. A multitask diffusion algorithm with $\ell_1$-norm co-regularizers is proposed in~\cite{nassif2015icassp} to address this problem leading to a subgradient descent method distributed among the agents. The aim of this work is to introduce a more general approach for solving such convex but \emph{non-differentiable} problems {by employing instead a diffusion forward-backward splitting strategy based on the proximal projection operator. Before proceeding, we recall the forward-backward splitting approach in a single-agent deterministic environment~\cite{combettes2011proximal,combettes2011proximity,parikh2013proximal}. 

Consider the problem 
\begin{equation}
	\label{eq: deterministic problem}
	\min\limits_{\bx\in\mathbb{R}^M}f(\bx)+ g(\bx)
\end{equation}
with $f$ a real-valued differentiable convex function whose gradient is $\beta$-Lipschitz continuous, and $g$ a real-valued convex function. The \textit{proximal gradient} method or the \textit{forward-backward splitting} approach for solving~\eqref{eq: deterministic problem} is given by the iteration~\cite{combettes2011proximal,parikh2013proximal}:
\begin{equation}
	\label{eq: FBS deterministic}
	\bx(i+1)=\text{prox}_{\mu g}(\bx(i)-\mu\nabla f(\bx(i))),
\end{equation}
where $\mu$ is a constant step-size chosen such that $\mu\in(0,2\beta^{-1}]$ to ensure convergence to the minimizer of~\eqref{eq: deterministic problem}. The gradient-descent step is the forward step (explicit step) and the proximal step is the backward step (implicit step). The proximal operator of $\mu g(\bx)$ at a given point $\bv\in\mathbb{R}^M$ is a real-valued map given by~\cite{parikh2013proximal}:
\begin{equation}
	\label{eq: proximal operator}
	\text{prox}_{\mu g}(\bv)=\argmin\limits_{\bx\in\mathbb{R}^M}~g(\bx)+\frac{1}{2\mu}\|\bx-\bv\|^2.
\end{equation}
Since the proximal operator needs to be calculated at each iteration in~\eqref{eq: FBS deterministic}, it is important to have a closed form expression for evaluating it. In this work, we derive a \textit{multitask diffusion adaptation strategy} where each node employs this approach for minimizing a cost function with sparsity based co-regularizers. Instead of using iterative algorithms for evaluating the proximal operator of a weighted sum of $\ell_1$-norms at each iteration~\cite{combettes2011proximity}, we shall instead derive a closed form expression that allows us to compute it exactly. We shall also examine under which conditions on the step-sizes the proposed multitask diffusion strategy is mean and mean-square stable. Simulations are conducted to show the effectiveness of the proposed strategy. An adaptive rule to guarantee an appropriate cooperation between clusters is also introduced.

\textbf{Notation.} In what follows, normal font letters denote scalars, boldface lowercase letters denote column vectors, and boldface uppercase letters denote matrices. We use the symbol $(\cdot)^\top$ to denote matrix transpose, the symbol $(\cdot)^{-1}$ to denote matrix inverse, and the symbol $\tr(\cdot)$ to denote the trace operator. The operator $\col\{\cdot\}$ stacks the column vectors entries on top of each other. The symbol $\otimes$ denotes the Kronecker product operation. The identity matrix of size $N\times N$ is denoted by $\bI_N$. The $N\times M$ matrices of zeros and ones are denoted by $\cb{0}_{N\times M}$ and $\cb{1}_{N\times M}$, respectively. The set $\N_k$ denotes the neighbors of node $k$ including $k$. The set $\N_k^-$ denotes the neighbors of node $k$ excluding $k$. Finally, $\C_i$ denotes the set of nodes in the $i$-th cluster and $\C(k)$ denotes the cluster to which node $k$ belongs.


\section{Multitask diffusion LMS with Forward-Backward splitting}
\label{sec:Multitask Diffusion LMS}

\subsection{Network model and problem formulation}
\label{Problem formulation}

We consider a network of $N$ nodes grouped into $Q$ connected clusters in a predefined topology. Clusters are assumed to be connected, i.e., there exists a path between any pair of nodes in the cluster. At every time instant $i$, every node $k$ has access to a zero-mean measurement $d_k(i)$ and a zero-mean $M\times 1$ regression vector $\bx_k(i)$ with positive covariance matrix $\bR_{\bx,k}=\expec\{\bx_k(i)\,\bx_k^\top(i)\}>0$. We assume the data to be related via the linear model: 
\begin{equation}
	\label{eq: linear data model}
	d_k(i)=\bx_k^\top(i)\bw^o_k+z_k(i),
\end{equation}
where $\bw^o_k$ is the $M\times 1$ unknown parameter vector, also called task, we wish to estimate at node $k$, and $z_k(i)$ is a zero-mean measurement noise of variance $\sigma_{z,k}^2$, independent of  $\bx_{\ell}(j)$ for all $\ell$ and $j$, and independent of $z_{\ell}(j)$ for $\ell\neq k$ or $i\neq j$. We assume that all nodes in a cluster are interested in estimating the same parameter vector, namely, $\bw^o_k=\bw^o_{\C_q}$ whenever $k$ belongs to cluster $\C_q$. However, if cluster $\C_p$ is connected to cluster $\C_q$, that is, there exists at least one link connecting a node from $\C_p$ to a node from $\C_q$, vectors $\bw^o_{\C_p}$ and $\bw^o_{\C_q}$ are assumed to have a large number of similar entries and only a relatively small number of distinct entries. Cooperation across these clusters can therefore be beneficial to infer $\bw^o_{\C_p}$ and $\bw^o_{\C_q}$. 

Considerable interest has been shown in the literature about estimating an optimum parameter vector $\bw^o$ subject to the property of being sparse. Motivated by the well-known LASSO problem~\cite{tibshirani1996regression} and compressed sensing framework~\cite{baraniuk2007cs}, different techniques for sparse adaptation have been proposed. For example, the authors in~\cite{Chen2009sparselms,Gu2009l0norm} promote sparsity within an LMS framework by considering regularizers based on the $\ell_1$-norm, reweighed $\ell_1$-norm, and convex approximation of $\ell_0$-norm. In~\cite{kopsinis2011projection}, projections of streaming data onto hyperslabs and weighted $\ell_1$ balls are used instead of minimizing regularized costs recursively. Proximal forward-backward splitting is considered in an adaptive scenario in~\cite{mukrami2010adaptivprox}. In the context of distributed learning over \emph{single-task} networks, diffusion LMS methods promoting sparsity have been proposed. Sparse diffusion LMS strategies using subgradient methods are proposed in~\cite{lorenzo2012ICASSP,Lorenzo2013sparse,Liu2012sparsediffusion} and using proximal methods are proposed in~\cite{wee2013proxdiffusion,lorenzo2014GMRF,vlaski2015prox}. In~\cite{chouvardas2012sparsity}, the authors employ projection-based techniques~\cite{kopsinis2011projection} to derive distributed diffusion algorithms promoting sparsity, and in~\cite{chouvardas2013greedy} a diffusion LMS algorithm for estimating an $s$-sparse vector is proposed based on adaptive greedy techniques similar to~\cite{mileounis2010greedy}. These techniques estimate the positions of non-zero entries in the target vector, and then perform computations on this subset. More generally, diffusion strategies based on proximal gradient for minimizing general costs (not necessarily mean-square error costs) and subject to a broader class of constraints on the parameter vector to be estimated (including sparsity) are derived in~\cite{vlaski2015prox}. 

Our purpose is to derive an adaptive learning algorithm over \emph{multitask} networks where optimum parameter vectors of neighboring clusters share a large number of similar entries and a relatively small number of distinct entries. Consider nodes $k$ and $\ell$ of neighboring clusters $\C(k)$ and $\C(\ell)$, and let $\bdelta_{k,\ell}$ denote the vector difference $\bw_{\C(k)}-\bw_{\C(\ell)}$. Promoting the sparsity of $\bdelta_{k,\ell}$ can be performed by considering the pseudo $\ell_0$-norm of $\bdelta_{k,\ell}$ as it denotes the number of nonzero entries. Nevertheless, $\|\bdelta_{k,\ell}\|_0$ is a non-convex co-regularizer that leads to computational challenges}. A common alternative is to use the $\ell_1$-norm regularization function defined as
\begin{equation}
	\label{eq: l1 norm}
	f_1(\bdelta_{k,\ell})=\|\bdelta_{k,\ell}\|_1=\sum_{m=1}^M |[\bdelta_{k,\ell}]_m|.
\end{equation}
Since the $\ell_1$-norm uniformly shrinks all the components of a vector and does not distinguish between zero and non-zero entries~\cite{candes2008enhancing}, it is common in the sparse adaptive filtering framework~\cite{Chen2009sparselms,kopsinis2011projection,mukrami2010adaptivprox,lorenzo2012ICASSP,Lorenzo2013sparse,wee2013proxdiffusion,lorenzo2014GMRF,chouvardas2012sparsity,gao2013forwardbackward} to consider a weighted formulation of the $\ell_1$-norm. Weighted  $\ell_1$-norm was designed to reduce the bias induced by the $\ell_1$-norm and enhance the penalization of the non-zero entries of a vector~\cite{candes2008enhancing,kopsinis2011projection,zou2006adaptive}. Given the weight vector $\balpha_{k\ell}=[\alpha_{k\ell}^1,\ldots,\alpha_{k\ell}^M]^\top$, with $\alpha_{k\ell}^m > 0 $ for all $m$, the weighted $\ell_1$-norm is defined as:
\begin{equation}
	\label{eq: reweighted l1 norm}
	f_2(\bdelta_{k,\ell})=\sum_{m=1}^M\alpha_{k\ell}^m\big|[\bdelta_{k,\ell}]_m\big|.
\end{equation}
The weights are usually chosen as:
\begin{equation}
\alpha_{k\ell}^m=\frac{1}{\epsilon+|[\bdelta_{k,\ell}^o]_m|}, \quad m=1,\ldots,M,
\end{equation}
where $\bdelta_{k,\ell}^o=\bw_k^o-\bw_\ell^o$. Since the optimum parameter vectors are not available beforehand, we set
\begin{equation}
\label{eq: weights}
	\alpha_{k\ell}^m(i)=\frac{1}{\epsilon+|[\bdelta_{k,\ell}(i-1)]_m|}, \quad m=1,\ldots,M,
\end{equation}
at each iteration $i$, where $\epsilon$ is a small constant to prevent the denominator from vanishing and $\bdelta_{k,\ell}(i)$ is the estimate of $\bdelta_{k,\ell}^o$ at nodes $k$ and $\ell$ and iteration $i$. This technique, also known as reweighted $\ell_1$ minimization~\cite{candes2008enhancing}, is performed at each iteration of the stochastic optimization process. It has been shown in~\cite{candes2008enhancing} that, by minimizing~\eqref{eq: reweighted l1 norm} with the weights~\eqref{eq: weights}, one minimizes the log-sum penalty function, $\sum_{m=1}^M\log(\epsilon+|[\bdelta_{k,\ell}]_m|)$, which acts like the $\ell_0$-norm by allowing a relatively large penalty to be placed on small nonzero coefficients and more strongly encourages them to be set to zero. In the sequel, we shall use $f(\bw_{\C(k)}-\bw_{\C(\ell)})$ to refer to the unweighted or reweighted $\ell_1$-norm promoting the sparsity of $\bw_{\C(k)}-\bw_{\C(\ell)}$. 

It is sufficient for this work to derive a distributed learning algorithm of the LMS type. We shall therefore assume that the local cost function $J_k(\bw_{\C(k)})$ at node $k$ is the mean-square error criterion defined by:
\begin{equation}
	\label{eq: local cost function}
	J_k(\bw_{\C(k)})=\expec\big\{| d_k(i)-\bx_k^\top(i)\bw_{\C(k)}|^2\big\}.
\end{equation}
Combining local mean-square-error cost functions and regularization functions, the {cooperative multitask estimation problem} is formulated as the problem of seeking a fully distributed solution for solving:
\begin{equation}
\begin{split}
	\label{eq: problem P}
	\hspace{-2mm}\min_{\bw_{\C_1},\ldots,\bw_{\C_Q}}\hspace{-2mm}\overline{J}^\text{glob}(\bw_{\C_1}&,\ldots,\bw_{\C_Q})=\min_{\bw_{\C_1},\ldots,\bw_{\C_Q}}\sum_{k=1}^NJ_k(\bw_{\C(k)})\\
	&+\eta\sum_{k=1}^N\hspace{-0.5mm}\sum_{\ell\in \N_k\setminus\C(k)}\hspace{-3.5mm}\rho_{k\ell}f(\bw_{\C(k)}-\bw_{\C(\ell)}),
\end{split}
\end{equation}
where $\eta>0$ is the regularization strength used to enforce sparsity. It ensures a tradeoff between fidelity to the measurements and prior information on the relationships between tasks. The weights $\rho_{k\ell}\geq0$ aim at locally adjusting the regularization strength. The notation $\N_k\setminus\C(k)$ denotes the set of neighboring nodes of $k$ that are not in the same cluster as $k$.

Note that the regularization terms \eqref{eq: l1 norm} and \eqref{eq: reweighted l1 norm} are symmetric with respect to the weight vectors $\bw_{\C(k)}$ and $\bw_{\C(\ell)}$, that is, $f(\bw_{\C(k)}-\bw_{\C(\ell)})=f(\bw_{\C(\ell)}-\bw_{\C(k)})$. Due to the summation over the $N$ nodes, each term $f(\bw_{\C(k)}-\bw_{\C(\ell)})$ can be viewed as weighted by $\frac{(\rho_{k\ell}+\rho_{\ell k})}{2}$ in~\eqref{eq: problem P}. Problem~\eqref{eq: problem P}  can therefore be written in an alternative way as:
\begin{equation}
\label{eq: problem P'}
\begin{split}
	\min_{\bw_{\C_1},\ldots,\bw_{\C_Q}}\hspace{-2mm}\overline{J}^\text{glob}(\bw_{\C_1},\ldots,&\bw_{\C_Q})
	=\min_{\bw_{\C_1},\ldots,\bw_{\C_Q}}\sum_{k=1}^NJ_k(\bw_{\C(k)})\\
	&+\eta\sum_{k=1}^{N}\hspace{-0.5mm} \sum_{\ell\in \N_k\setminus\C(k)}\hspace{-3.5mm}p_{k\ell}f(\bw_{\C(k)}-\bw_{\C(\ell)})
\end{split}
\end{equation}
where the factors $\{p_{k\ell}\}$ are symmetric, i.e., $p_{k\ell}=p_{\ell k}$, and are given by:
\begin{equation}
p_{k\ell}\triangleq\frac{(\rho_{k\ell}+\rho_{\ell k})}{2}.
\end{equation}
One way to avoid symmetrical regularization is to consider an alternative problem formulation defined in terms of $Q$ Nash equilibrium problems as done in~\cite{chen2014multitask} with $\ell_2$-norm co-regularizers. In this paper, we shall focus on problem~\eqref{eq: problem P}.

Let us consider the variable $\bw_{\C_j}$ of the $j$-th cluster. Given $\bw_{\C(\ell)}$ with $\ell\in\N_k\setminus\C_j$ and $k\in\C_j$, the subdifferential of $\overline{J}^\text{glob}(\bw_{\C_1},\ldots,\bw_{\C_Q})$ in~\eqref{eq: problem P'} with respect to $\bw_{\C_j}$ is given by: 
\begin{align}
	&\partial_{\bw_{\C_j}}\overline{J}^\text{glob}(\bw_{\C_1},\ldots,\bw_{\C_Q})\nonumber\\
	&=\sum_{k\in\C_j}\hspace{-1mm}\nabla_{\bw_{\C_j}}\hspace{-1mm}J_k(\bw_{\C_j})+2\eta\hspace{-0.5mm}\sum_{k\in\C_j}\hspace{-.5mm} \sum_{\ell\in \N_k\setminus\C_j}\hspace{-3.25mm}p_{k\ell}\partial_{\bw_{\C_j}}\hspace{-0.5mm}f(\bw_{\C_j}-\bw_{\C(\ell)}),
\end{align}
where we used the fact that the regularization terms~\eqref{eq: l1 norm}, \eqref{eq: reweighted l1 norm}, and the regularization factors $\{p_{k\ell}\}$  are symmetric. Since we are interested in a distributed strategy for solving~\eqref{eq: problem P} that relies only on in-network processing, we associate the following regularized problem $(\cp{P}_j)$ with each cluster $\cp{C}_j$:
\begin{equation}
\begin{split}
	\label{eq: P_j}
	\min_{\bw_{\C_j}} \overline{J}_{\C_j}(\bw_{\C_j})=\min_{\bw_{\C_j}}&
	\sum_{k\in\C_j}\expec\big\{|d_k(i)-\bx_k^\top(i)\bw_{\C_j}|^2\big\}+\\
	&2\eta\sum_{k\in{\C_j}}\sum_{\ell\in \N_k\setminus\C_j}p_{k\ell}f(\bw_{\C_j}-\bw_{\C(\ell)}).
\end{split}
\end{equation}
Given $\bw_{\C(\ell)}$ with $\ell\in\N_k\setminus\C_j$, note that the costs in problems~\eqref{eq: problem P} and \eqref{eq: P_j} have the same subdifferential relative to $\bw_{\C_j}$.  In order that each node can solve the problem in an autonomous and adaptive manner using only local interactions, we shall derive a distributed iterative algorithm for solving~\eqref{eq: problem P} by considering \eqref{eq: P_j} since both costs have the same subdifferential information.

\subsection{Problem relaxation}

We shall now extend the derivations in~\cite{sayed2014diffusion,Cativelli2010diffusion,Chen2014ICASSP} to handle multitask estimation problems with nondifferentiable functions. In the sequel, we write $\bw_k$ instead of $\bw_{\C(k)}$ for simplicity of notation. First, we associate with each node $k$ an unregularized local cost function $J_k^{\text{loc}}(\cdot)$ and a regularized local cost function $\overline{J}_k^{\text{loc}}(\cdot)$ of the form:
\begin{equation}
	\label{eq: unregularized neighborhood function}
	J_k^{\text{loc}}(\bw_k)=\sum_{\ell\in\N_k\cap\C(k)}c_{\ell k}
	\expec\big\{|d_\ell(i)-\bx_{\ell}^\top(i)\bw_k |^2\big\},
\end{equation}
\begin{equation}
	\label{eq: regularized neighborhood function}
	\begin{split}
	\overline{J}_k^{\text{loc}}(\bw_{k})=&\sum\limits_{\ell\in\N_k\cap\C(k)}c_{\ell k}\expec\big\{|d_\ell(i)-\bx_{\ell}^\top(i)\bw_k |^2\big\}+\\
	&\quad2\eta\sum_{\ell\in\N_k\setminus\C(k)}p_{k\ell}f(\bw_k-\bw_\ell),
	\end{split}
\end{equation}
where $\N_k\cap\C(k)$ denotes the set of nodes in the neighborhood of node $k$ that belongs to its cluster, and $\{c_{\ell k}\}$ are non-negative weights satisfying 
\begin{equation}
	\label{eq: conditions on c}
	\sum_{k=1}^N c_{\ell k}=1,\quad\text{and}\quad c_{\ell k}=0\quad \text{if}\quad k\notin\N_\ell\cap\C(\ell).
\end{equation}
Note that $\bw_k=\bw_\ell$ whenever $\ell\in\N_k\cap\C(k)$. Both costs \eqref{eq: unregularized neighborhood function} and \eqref{eq: regularized neighborhood function} consist of a convex combination of mean-square errors in the neighborhood of node $k$ but limited to its cluster. In addition, expression \eqref{eq: regularized neighborhood function} takes interactions among neighboring clusters into account. Let us consider node $k$ belonging to cluster $\C_j$, i.e., $\C_j=\C(k)$. It can be checked that $\overline{J}_{\C_j}(\bw_{\C_j})$ in $\eqref{eq: P_j}$ can be written as:
\begin{equation}
	\label{eq: P_j local costs}
	\overline{J}_{\C_j}(\bw_{\C_j})
	=\sum_{\ell\in\C_j}\overline{J}_\ell^{\text{loc}}(\bw_{\ell})
	=\overline{J}_k^{\text{loc}}(\bw_{k})+\sum_{\ell\in\C_j\setminus\{k\}}\overline{J}_\ell^{\text{loc}}(\bw_{\ell}),
\end{equation}
The term $\sum_{\ell\in\C_j\setminus\{k\}}\overline{J}_\ell^{\text{loc}}(\bw_{\ell})$ contains terms promoting relationships between nodes $\ell\in\C_j\setminus\{k\}$ and their neighbors that are outside $\C_j$ but not necessarily in the neighborhood of node $k$. To limit these inter-cluster information exchanges to node $k$ and its extra-cluster neighbors, we relax $\sum_{\ell\in\C_j\setminus\{k\}}\overline{J}_\ell^{\text{loc}}(\bw_{\ell})$ to $\sum_{\ell\in\C_j\setminus\{k\}}J_\ell^{\text{loc}}(\bw_{\ell})$. 
Since~\eqref{eq: unregularized neighborhood function} is second-order differentiable, a completion-of-squares argument shows that each $J_\ell^{\text{loc}}(\bw_{\ell})$ can be expressed as~\cite{sayed2014diffusion}:
\begin{equation}
	\label{eq: unregularized neighborhood function 2}
	J_\ell^{\text{loc}}(\bw_{\ell})
	=J_\ell^{\text{loc}}(\bw_{\ell}^{\text{loc}})+\|\bw_\ell-\bw_\ell^{\text{loc}}\|^2_{\bR_\ell},
\end{equation}
where the notation $\|\bx\|^2_{\bSig}$ denotes $\bx^\top\bSig\,\bx$  for any nonnegative definite matrix $\bSig$, $\bw^{\text{loc}}_\ell$ is the minimizer of $J_\ell^{\text{loc}}(\bw_{\ell})$, and $\bR_\ell$ is given by:
\begin{equation}
	\bR_\ell=\sum_{k\in\N_\ell\cap\C(\ell)}c_{k \ell}	\bR_{\bx,k}.
\end{equation}
Thus, using \eqref{eq: regularized neighborhood function}, \eqref{eq: P_j local costs}, and \eqref{eq: unregularized neighborhood function 2} and dropping the constant term $J_\ell^{\text{loc}}(\bw_{\ell}^{\text{loc}})$, we can replace the original cluster cost~\eqref{eq: P_j} by the following cost function for cluster $\C(k)$ at node $k$:
\begin{equation}
	\label{eq: cluster function 1 relative to node k}
	\begin{split}
	\overline{J}'_{\C(k)}(\bw_k&)
	=\sum_{\ell\in\N_k\cap\C(k)}c_{\ell k}\expec\big\{|d_\ell(i)-\bx_{\ell}^\top(i)\bw_k |^2\big\}+\\
	&2\eta\hspace{-2mm}\sum_{\ell\in\N_k\setminus\C(k)}\hspace{-0.35cm}p_{k\ell}f(\bw_k-\bw_\ell)
	+\hspace{-2mm}\sum_{\ell\in\C(k)\setminus\{k\}}\hspace{-0.35cm}\|\bw_\ell-\bw_\ell^{\text{loc}}\|^2_{\bR_\ell}.
	\end{split}
\end{equation}
Equation \eqref{eq: cluster function 1 relative to node k} is an approximation relating the local cost function $\overline{J}_k^{\text{loc}}(\bw_{k})$ at node $k$ to the global cost function~\eqref{eq: P_j} associated with the cluster $\C(k)$. Node $k$ cannot minimize~\eqref{eq: cluster function 1 relative to node k} directly since this cost still requires global information that may not be available in its neighborhood. To avoid access to information via multihop, we relax $\overline{J}'_{\C(k)}(\bw_k)$ by limiting the sum in the third term on the RHS of \eqref{eq: cluster function 1 relative to node k} over the neighbors of node $k$. In addition, since the covariance matrices $\bR_{\bx,\ell}$ may not be known beforehand within the context of online learning, a useful strategy proposed in~\cite{sayed2014diffusion} is to substitute the covariance matrices $\bR_\ell$ by diagonal matrices of the form $b_{\ell k}\bI_M$, where $b_{\ell k}$ are nonnegative coefficients that allow to assign different weights to different neighbors. Later, these coefficients will be incorporated into a left stochastic matrix and the designer does not need to worry about their selection. Based on the arguments presented so far, the cluster cost function at each node $k$ can be relaxed as follows: 
\begin{equation}
	\label{eq: cluster function 2 relative to node k}
	\begin{split}
	\overline{J}''_{\C(k)}&(\bw_k)
	=\sum_{\ell\in\N_k\cap\C(k)}c_{\ell k}\expec\big\{|d_\ell(i)-\bx_{\ell}^\top(i)\bw_k |^2\big\}\\
	&+2\eta\hspace{-2mm}\sum_{\ell\in\N_k\setminus\C(k)}\hspace{-0.3cm}p_{k\ell}f(\bw_k-\bw_\ell) 
	+\hspace{-3mm}\sum_{\ell\in\N_k^-\cap\C(k)}\hspace{-0.3cm}b_{\ell k}\|\bw_k-\bw_\ell^{\text{loc}}\|^2.
	\end{split}
\end{equation}
Since this cost function only relies on data available in the neighborhood of each node $k$, we can now proceed to derive distributed strategies.

The first and third terms on the RHS of \eqref{eq: cluster function 2 relative to node k} are second-order differentiable and strictly convex. The second term is convex but not continuously differentiable. In~\cite{nassif2015icassp}, a multitask Adapt-then-Combine (ATC) diffusion algorithm was derived using subgradient techniques. The purpose of this work is to obtain an iterative algorithm for solving the convex minimization problem~\eqref{eq: cluster function 2 relative to node k} using a forward-backward splitting approach.

\subsection{Multitask diffusion with forward-backward splitting approach}

Let $\bw_k(i)$ denote the estimate of $\bw^o_k$ at node $k$ and iteration $i$. Considering a forward-backward splitting strategy for solving~\eqref{eq: cluster function 2 relative to node k}, we have:
\begin{equation}
\label{eq: general splitting approach}
	\bw_k(i+1)=\text{prox}_{2\eta\nu_k \tilde{g}_{k,i}}\Big(\bw_k(i)-\nu_k\nabla_{\bw_k} J''_{\C(k)}(\bw_k(i))\Big),
\end{equation}
with $\nu_k$ a positive step-size parameter,
\begin{equation}
	\label{eq: g_k form 1}
	\tilde{g}_{k,i}(\bw_k)=\sum_{\ell\in\N_k\setminus\C(k)}p_{k\ell}f(\bw_k-\bw_\ell(i)),
\end{equation}
and $J''_{\C(k)}(\bw_k)$ denoting the unregularized part of $\overline{J}''_{\C(k)}(\bw_k)$ limited to the first and third terms on the RHS of~\eqref{eq: cluster function 2 relative to node k}. Let
\begin{equation}
\bphi_k(i+1)=\bw_k(i)-\nu_k\nabla_{\bw_k}  J''_{\C(k)}(\bw_k(i)).
\end{equation}
 Node $k$ can run the Adapt-then-Combine (ATC) form of diffusion~\cite{sayed2014diffusion} for evaluating $\bphi_k(i+1)$. Thus, we arrive at the following Adapt-then-Combine (ATC) diffusion strategy with forward-backward splitting for solving problem~\eqref{eq: problem P} in a fully distributed adaptive manner:
\begin{equation}
	\label{eq: ATC FBS}
	\left\lbrace
	\begin{array}{lr}
	\begin{split}
	\bpsi_k(i+1)= &\bw_k(i)+\\
			  &\mu_k\hspace{-3mm}\sum\limits_{\ell\in\N_k\cap\C(k)}\hspace{-2mm}c_{\ell k}\,\bx_{\ell}(i)[d_{\ell}(i)-\bx_{\ell}^\top(i)\bw_k(i)],
	\end{split}\\
	\bphi_k(i+1)=\sum\limits_{\ell\in\N_k\cap\C(k)}a_{\ell k}\bpsi_\ell(i+1),\\
	\bw_k(i+1)=\text{prox}_{\eta\mu_k g_{k,i+1}}(\bphi_k(i+1)),
	\end{array}
	\right.
\end{equation}
where $\mu_k=2\nu_k$ is introduced to avoid an extra factor of $2$ multiplying $\nu_k$ and coming from evaluating the gradient of squared quantities in $J''_{\C(k)}(\bw_k)$, $\{a_{\ell k}\}$ are nonnegative combination coefficients satisfying:
\begin{equation}
	\label{eq: conditions on a}
	\sum_{\ell=1}^N a_{\ell k}=1,\quad\text{and}\quad a_{\ell k}=0\quad \text{if}\quad \ell\notin\N_k\cap\C(k),
\end{equation}
and 
\begin{equation}
	\label{eq: g_k form 2}
	g_{k,i+1}(\bw_k)\triangleq\sum_{\ell\in\N_k\setminus\C(k)}p_{k\ell}f(\bw_k-\bphi_\ell(i+1)).
\end{equation}
Functions $\tilde{g}_{k,i}(\cdot)$ in~\eqref{eq: g_k form 1} and $g_{k,i+1}(\cdot)$ in~\eqref{eq: g_k form 2} are iteration dependent through $\bw_\ell(i)$ and $\bphi_\ell(i+1)$. Note that we have substituted $\bw_{\ell}(i)$ in~\eqref{eq: g_k form 1} by $\bphi_{\ell}(i+1)$ in~\eqref{eq: g_k form 2} since $\bphi_{\ell}(i+1)$ is an updated estimate of $\bw_{\ell}(i)$ at node~$\ell$. The proximal operator of $\eta\mu_k g_{k,i+1}(\cdot)$ in the third step of~\eqref{eq: ATC FBS} needs to be evaluated at each iteration $i+1$ and for all nodes $k$ in the network. A closed-form expression is recommended to achieve higher computational efficiency. We shall derive such closed-form expression when $f$ in~\eqref{eq: g_k form 2} is selected either as the $\ell_1$-norm or the reweighted $\ell_1$-norm --- see Sec.~\ref{subsec: proximal operator} for details.

The multitask diffusion LMS~\eqref{eq: ATC FBS} with forward-backward splitting starts with an initial estimate $\bw_k(0)$ for all $k$, and repeats~\eqref{eq: ATC FBS} at each instant $i\geq 0$ and for all $k$. In the first step of~\eqref{eq: ATC FBS}, which corresponds to the adaptation step, node $k$ receives from its intra-cluster neighbors their raw data $\{d_{\ell}(i),\bx_{\ell}(i)\}$, combines this information through the coefficients $\{c_{\ell k}\}$, and uses it to update its estimate $\bw_k(i)$ to an intermediate estimate $\bpsi_{k}(i+1)$. The second step in~\eqref{eq: ATC FBS} is a combination step where node $k$ receives the intermediate estimates $\{\bpsi_{\ell}(i+1)\}$ from its intra-cluster neighbors and combines them through the coefficients $\{a_{\ell k}\}$ to obtain the intermediate value $\bphi_{k}(i+1)$. Finally, in the third step in~\eqref{eq: ATC FBS}, node $k$ receives the intermediate estimates $\{\bphi_{\ell}(i+1)\}$ from its neighbors that are outside its cluster and evaluates the proximal operator of the function in~\eqref{eq: g_k form 2} at $\bphi_k(i+1)$ to obtain $\bw_k(i+1)$. To run the algorithm, each node $k$ only needs to know the step-size $\mu_k$, the regularization strength $\eta$, the regularization weights $\{p_{k\ell}\}_{\ell\in\N_k\setminus\C(k)}$, and the coefficients $\{a_{\ell k},c_{\ell k}\}_{\ell\in\N_k\cap\C(k)}$ satisfying conditions~\eqref{eq: conditions on c} and~\eqref{eq: conditions on a}. The scalars $\{a_{\ell k},c_{\ell k}\}$ and $\{\rho_{k \ell}\}$ correspond to weighting coefficients over the edges linking node $k$ to its neighbors $\ell$ according to whether these neighbors lie inside or outside its cluster. There are several ways to select these coefficients~\cite{sayed2014adaptivenet,sayed2014diffusion,chen2014multitask,sayed2014adaptation}. In Section~\ref{sec: simulation results}, we propose an adaptive rule for selecting each regularization weight $p_{k\ell}$ based on a measure of the sparsity level of $\bw_k^o-\bw_\ell^o$ at node $k$. Finally, note that alternative implementations of~\eqref{eq: ATC FBS} may be considered. In particular, the adaptation step can be followed by the proximal step, before or after aggregation as in the possible Adapt-then-Combine and Combine-then-Adapt diffusion strategies. 

Algorithm~\eqref{eq: ATC FBS} may be applied to multitask problems involving any type of coregularizers $f(\cdot)$ provided that the proximal operator of a weighted sum of these regularizers can be assessed in closed form. In the next section, we shall focus on the particular case of sparsity promoting regularizers.

\subsection{Proximal operator of weighted sum of $\ell_1$-norms}
\label{subsec: proximal operator}

We shall now derive a closed form expression for the proximal operator of the convex function $g_{k,i+1}(\bw_k)$ in~\eqref{eq: g_k form 2}. Considering both regularizations addressed in this work, that is, the $\ell_1$-norm~\eqref{eq: l1 norm} and the reweighted $\ell_1$-norm~\eqref{eq: reweighted l1 norm},  we write:
\begin{eqnarray}
	\label{eq: g_k form 3}
	g_{k,i+1}(\bw_k)\hspace{-3mm}&=&\hspace{-4mm}\sum\limits_{\ell\in\N_k\setminus\C(k)}p_{k\ell}\sum\limits_{m=1}^M
	\alpha_{k\ell}^m(i)\big|[\bw_k]_m-[\bphi_\ell(i+1)]_m\big|\nonumber\\
	 &=&\sum\limits_{m=1}^M\Phi_{{k},m,i+1}([\bw_k]_m)
\end{eqnarray}
where $\Phi_{{k},m,i+1}([\bw_k]_m)$ is the iteration-dependent function given by:
\begin{equation}
	\label{eq: Phi_km}
	\Phi_{{k},m,i+1}([\bw_k]_m)=\hspace{-2mm}\sum\limits_{\ell\in\N_k\setminus\C(k)}p_{k\ell}\,
	\alpha_{k\ell}^m(i)\big|[\bw_k]_m-[\bphi_\ell(i+1)]_m\big|.
\end{equation}
Since $g_{k,i+1}(\bw_k)$ is fully separable, its proximal operator can be evaluated component-wise~\cite{parikh2013proximal}:
\begin{equation}
	\label{eq: fully separable prox}
	\begin{split}
	[&\text{prox}_{\eta\mu_k g_{k,i+1}}(\bphi_k(i+1))]_m\\
	&=\text{prox}_{\eta\mu_k\Phi_{{k},m,i+1}}([\bphi_k(i+1)]_m),\quad\forall m=1,\ldots,M.
	\end{split}
\end{equation}
For clarity of presentation, we shall now derive the proximal operator of a function $h(\cdot)$ similar to $\Phi_{{k},m,i+1}$.
Next, we shall establish the closed-form expression for $\text{prox}_{\eta\mu_k\Phi_{{k},m,i+1}}(\cdot)$ by identification. 

Let $h:\mathbb{R}\rightarrow\mathbb{R}$ be a combination of absolute value functions defined as:
\begin{equation}
	\label{eq: func sum absolute}
	h(x)\triangleq\sum\limits_{j=1}^J c_j\,h_j(x)=\sum\limits_{j=1}^J c_j\, |x-b_j|,
\end{equation}
with $c_j>0$ for all $j$ and $b_1<b_2<\ldots<b_J$. Note that this ordering is assumed for convenience of derivation and does not affect the final result. Iterative algorithms have been proposed in the literature for evaluating the proximal operator of sums of composite functions~\cite{combettes2011proximal,combettes2011proximity}. We are, however, able to derive a closed-form expression for~\eqref{eq: func sum absolute} as detailed in the sequel. From the optimality condition for \eqref{eq: proximal operator}, namely that zero belongs to the subgradient set at the minimizer $\prox_{\lambda h}(v)$, we have,
\begin{equation}
	\label{eq: optimality condition}
	\begin{split}
	&0\in\partial h(\prox_{\lambda h}(v))+\frac{1}{\lambda}(\prox_{\lambda h}(v)-v)\\
	&\Rightarrow	v-\prox_{\lambda h}(v) \in\lambda\partial h(\prox_{\lambda h}(v)).
	\end{split}
\end{equation}
Since $x\in \mathbb{R}$ and $c_j$ are non-negative, we have~\cite[Chapter 5: Lemma 10]{polyak1987optim}:
\begin{equation}
	\label{eq: sum of partial derivatives}
	\partial\Big(\sum\limits_{j=1}^J c_j h_j(x)\Big)=\sum\limits_{j=1}^J c_j \partial h_j(x)=\sum\limits_{j=1}^J c_j \partial |x-b_j|.
\end{equation}
Hence, the subdifferential of the real valued convex function $h(x)$ in~\eqref{eq: func sum absolute} is:
\begin{equation}
\label{eq: subdifferential h}
\partial h(x)=\left\lbrace
\begin{split}
&-\sum_{j=1}^J c_j,~\quad\quad\quad\quad\quad\text{if}\quad x<b_1,\\
&c_1\cdot[-1,1]-\sum_{j=2}^J c_j,\quad\text{if}\quad x=b_1,\\
&c_1-\sum_{j=2}^J c_j,~~\quad\quad\quad\quad\text{if}\quad b_1<x<b_2,\\
&\quad\quad\vdots\\
&\sum_{j=1}^{J-1} c_j +c_J\cdot[-1,1],\quad\text{if}\quad x=b_J,\\
&\sum_{j=1}^{J} c_j,\quad\quad\quad\quad\quad\quad \quad\text{if}\quad x>b_J.
\end{split}
\right.
\end{equation}
From~\eqref{eq: optimality condition} and~\eqref{eq: subdifferential h}, extensive but routine calculations lead to the following implementation for evaluating the proximal operator of $h$ in~\eqref{eq: func sum absolute}. Let us decompose $\mathbb{R}$ into $J+1$ intervals such that $\mathbb{R}=\bigcup\limits_{n=0}^J \cI_{n}$ where, as illustrated in Fig.~\ref{fig: intervals}:
\begin{eqnarray}
\label{eq: I0}\cI_{0}&\triangleq&\Big]-\infty \, ,\, b_1-\lambda\sum\limits_{j=1}^J c_j\Big[,\label{eq: I_0}\\
\label{eq: In}\cI_{n}&\triangleq&\cI_{n,1}\cup\cI_{n,2},\qquad n=1,\ldots, J,\label{eq: I_n}
\end{eqnarray}
with
\begin{eqnarray}
\label{eq: In1}\cI_{n,1}\hspace{-3mm}&\triangleq&\hspace{-3.25mm}\Big[ b_n-\lambda\Big(\sum\limits_{j=n}^J c_j-\hspace{-0.5mm}\sum\limits_{j=1}^{n-1}c_j\Big)  ,  b_n\hspace{-0.5mm}-\lambda\Big(\hspace{-1mm}\sum\limits_{j=n+1}^J\hspace{-1mm} c_j-\hspace{-0.5mm}\sum\limits_{j=1}^{n}c_j\Big)\Big[,\nonumber\\
		\hspace{-3mm}& &\hspace{-3.25mm}\qquad \qquad\qquad\quad n=1,\ldots,J,\\
\label{eq: In2}\cI_{n,2}\hspace{-3mm}&\triangleq&\hspace{-3.25mm}\Big[b_n\hspace{-0.5mm}-\hspace{-0.25mm}\lambda\Big(\hspace{-1mm}\sum\limits_{j=n+1}^J \hspace{-2.25mm}c_j\hspace{-0.25mm}-\hspace{-0.5mm}\sum\limits_{j=1}^{n}c_j\Big) , b_{n+1}\hspace{-0.5mm}-\hspace{-0.25mm}\lambda\Big(\hspace{-1.25mm}\sum\limits_{j=n+1}^J\hspace{-2.25mm} c_j\hspace{-0.5mm}-\hspace{-0.5mm}\sum\limits_{j=1}^{n}c_j\Big)\Big[,\nonumber\\		
		\hspace{-3mm}& &\hspace{-3.25mm}\qquad \qquad\qquad\quad n=1,\ldots,J-1,\\
\label{eq: IJ2}\cI_{J,2}\hspace{-3mm}&\triangleq&\hspace{-3.25mm}\Big[ b_J+\lambda\sum\limits_{j=1}^J c_j  , +\infty\Big[.
\end{eqnarray}
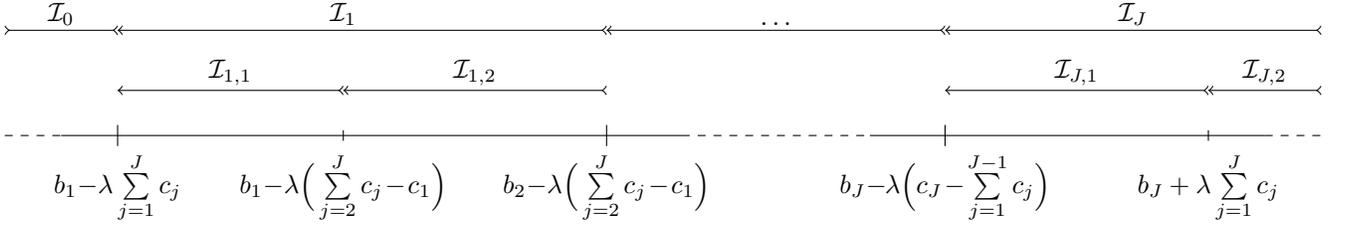
\begin{figure*}
\centering
\begin{tikzpicture}
    \draw [dashed](0,0) -- (0.75,0);
    \draw (0.75,0) -- (1.5,0);
    \draw (1.5,0) -- (4.5,0);
     \draw (4.5,0) -- (8,0);
      \draw (8,0) -- (9,0);
    \draw [dashed] (9,0) -- (11.5,0);
    \draw (11.5,0) -- (12.5,0);
     \draw (12.5,0) -- (16,0);
      \draw (16,0) -- (16.75,0);
       \draw [dashed](16.75,0) -- (17.5,0);
    
  \draw [>-<](0,1.4) -- (1.5,1.4);
   \draw  [<-<](1.5,1.4) -- (8,1.4);
   \draw  [<-<](1.5,0.6) -- (4.5,0.6);
     \draw [<-<](4.5,0.6) -- (8,0.6);
    \draw [<-<](12.5,0.6) -- (16,0.6);
     \draw [<-<](16,0.6) -- (17.5,0.6);
     \draw [<-<](12.5,1.4) -- (17.5,1.4);
      \draw [<-<](8,1.4) -- (12.5,1.4);
      
    \foreach \x in {1.5,8,12.5}
      \draw (\x cm,4pt) -- (\x cm,-4pt);
     \foreach \x in {4.5,16}
     \draw (\x cm,2pt) -- (\x cm,-2pt);

   \draw (1.5,0) node[below=3pt] {$b_1\hspace{-0.75mm}-\hspace{-0.75mm}\lambda\sum\limits_{j=1}^Jc_j $};
    \draw (4.5,0) node[below=3pt] {$ b_1\hspace{-0.75mm}-\hspace{-0.75mm}\lambda\Big(\sum\limits_{j=2}^Jc_j\hspace{-0.5mm}-\hspace{-0.5mm}c_1\Big) $};
    \draw (8,0) node[below=3pt] {$ b_2\hspace{-0.75mm}-\hspace{-0.75mm}\lambda\Big(\sum\limits_{j=2}^Jc_j \hspace{-0.5mm}-\hspace{-0.5mm}c_1\Big)$};
     \draw (12.5,0) node[below=3pt] {$ b_J\hspace{-0.75mm}-\hspace{-1mm}\lambda\Big(c_J\hspace{-0.5mm}-\hspace{-0.5mm}\sum\limits_{j=1}^{J-1}c_j\Big)$};
     \draw (16,0) node[below=3pt] {$b_J+\lambda\sum\limits_{j=1}^{J}c_j$} node[above=3pt] {$  $};
    \draw (0.75,1.35) node[above=0pt] {$\cI_0$} ;
    \draw (4.5,1.35) node[above=0pt] {$\cI_1$} ;
     \draw (3,0.55) node[above=0pt] {$\cI_{1,1}$} ;
     \draw (6.25,0.55) node[above=0pt] {$\cI_{1,2}$} ;
      \draw (10.25,1.35) node[above=0pt] {$\ldots$} ;
     \draw (14.25,0.55) node[above=0pt] {$\cI_{J,1}$} ;
      \draw (16.75,0.55) node[above=0pt] {$\cI_{J,2}$} ;
    \draw (15,1.35) node[above=0pt] {$\cI_J$} ;
 \end{tikzpicture}
\caption{Decomposition of $\mathbb{R}$ into $J+1$ intervals given by~\eqref{eq: I_0}--\eqref{eq: IJ2}. The width of the intervals depends on the weights $\{c_j\}_{j=1}^J$ and on the coefficients $\{b_j\}_{j=1}^J$.}
\label{fig: intervals}
\end{figure*}

Depending on the interval to which $v$ belongs, we evaluate the proximal operator according to:
\begin{equation}
	\label{eq: prox sum absolute}
	\text{prox}_{\lambda h}(v)
	=\left\lbrace
	\begin{array}{ll}
		v+\lambda\sum\limits_{j=1}^J c_j,	&	\text{if}~ v\in\cI_0	\\		

		b_n,	 &	\text{if } v\in\cI_{n,1}\\
		
		v+\lambda\Big(\sum\limits_{j=n+1}^J c_j-\sum\limits_{j=1}^{n}c_j\Big),& \text{if}~ v\in\cI_{n,2}.
	\end{array}
\right.
\end{equation}
In order to make clearer how the operator in~\eqref{eq: prox sum absolute} works, we plot $\prox_{h}(v)$ for three expressions of $h$ in Fig.~\ref{fig: prox_sum_abs}.
\begin{figure*}
	\centering
	\subfigure[$h(x)=|x|$ ]{\includegraphics[width=0.3\textwidth, height=0.18\textheight]{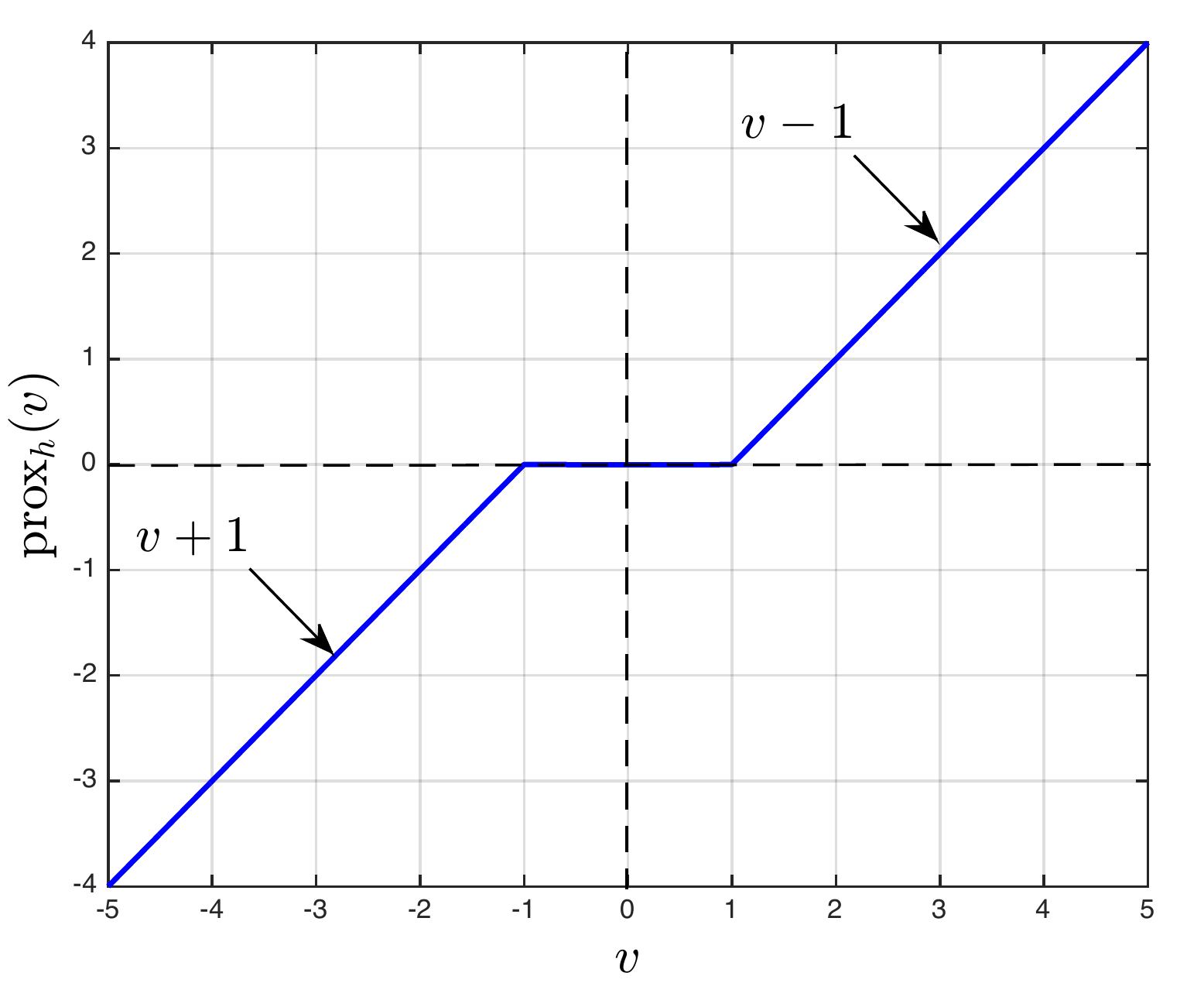}}
	\subfigure[$h(x)=\frac{1}{2}|x-2|$]{\includegraphics[width=0.3\textwidth,height=0.18\textheight]
	{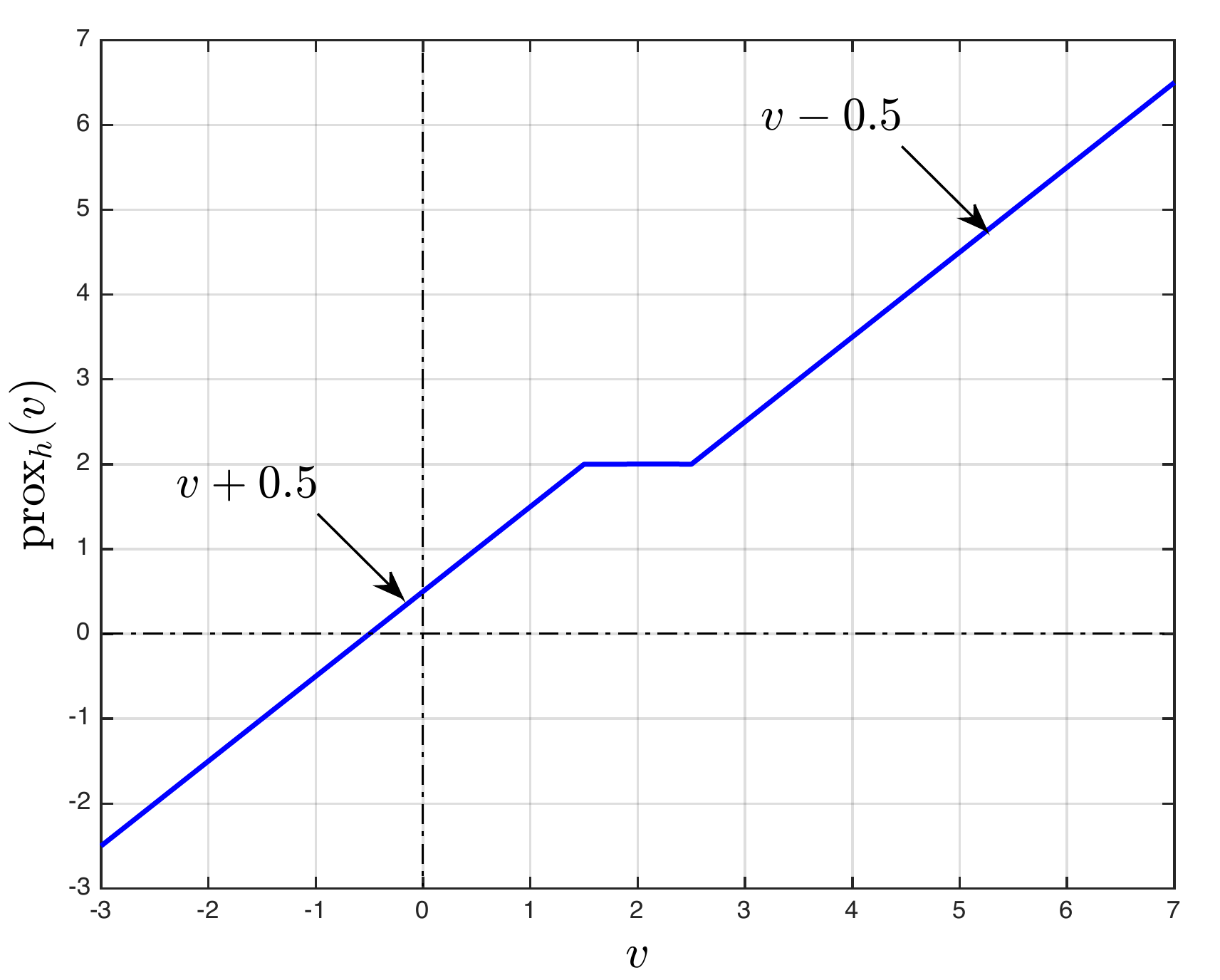}}
	\subfigure[$h(x)=\frac{1}{5}|x+1|+\frac{3}{10}|x-1|+\frac{1}{2}|x-4|$]
	{\includegraphics[width=0.3\textwidth,height=0.18\textheight]{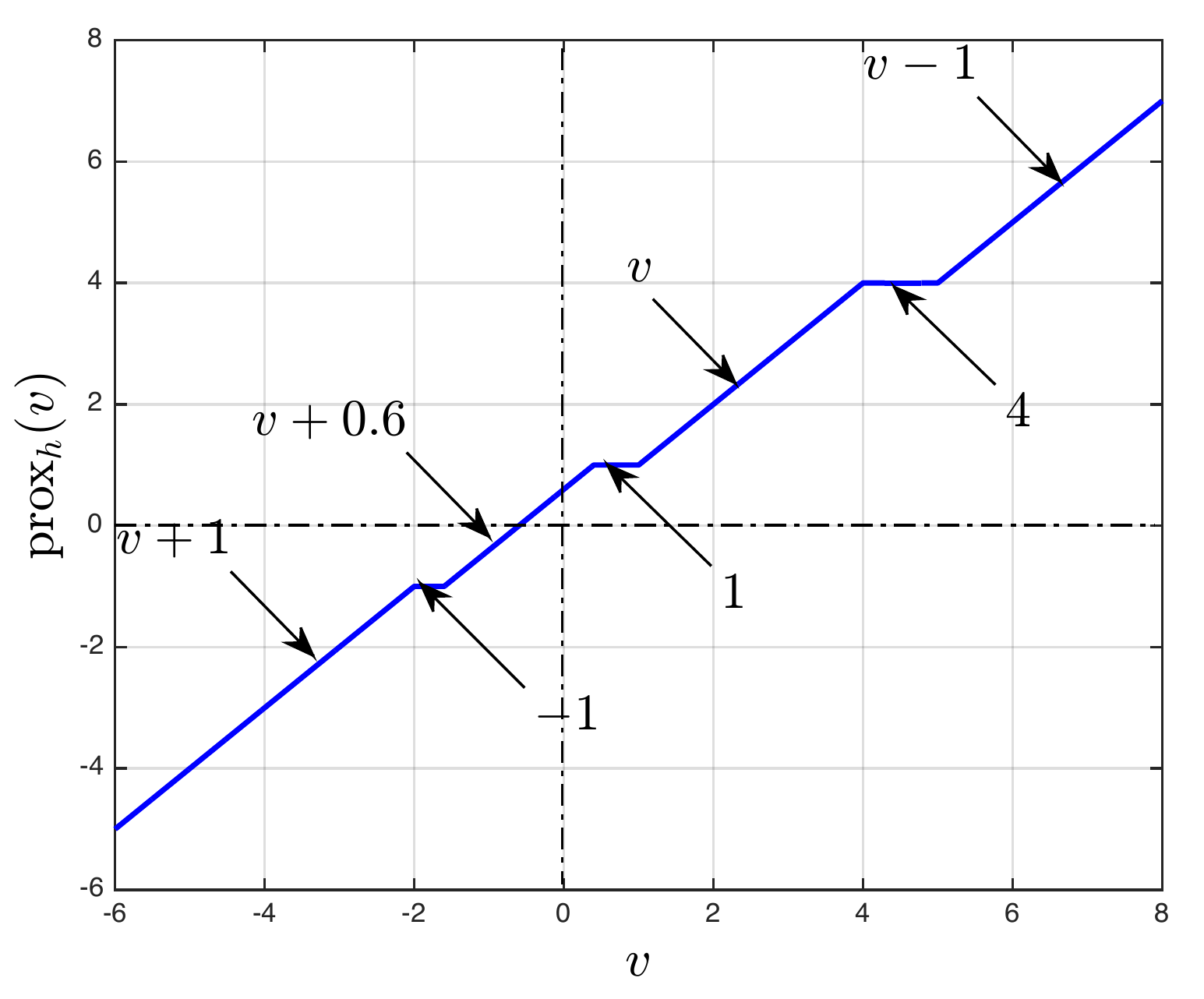}}
	\caption{Proximal operator $\prox_{\lambda h}(v)$ versus $v\in\mathbb{R}$ with $\lambda=1$ 
	and $h: \mathbb{R}	\rightarrow\mathbb{R}$, $h(x)=\sum\limits_{j=1}^J c_j |x-b_j|$.}
	\label{fig: prox_sum_abs}
\end{figure*}

It can be checked that the proximal operator in~\eqref{eq: prox sum absolute} can be written more compactly as:
\begin{equation}
	\label{eq: prox casting}
	\text{prox}_{\lambda h}(v)=v-\lambda\Gamma(v),
\end{equation} 
where
\begin{equation}
\begin{split}
	\label{eq: limiter function}
	&\Gamma(v)=\\
	&\frac{1}{2}\hspace{-0.5mm}\sum\limits_{n=1}^J\hspace{-0.5mm}\Big\{\Big|\frac{v-b_n}{\lambda}\hspace{-0.5mm}-\hspace{-0.5mm}\sum_{j=1}^{n-1}\hspace{-0.25mm}c_j\hspace{-0.5mm}+\hspace{-0.75mm}\sum_{j=n}^J\hspace{-0.25mm}c_j\Big|\hspace{-0.5mm}-\hspace{-0.5mm}\Big|\frac{v-b_n}{\lambda}\hspace{-0.5mm}-\hspace{-0.5mm}\sum_{j=1}^{n}c_j\hspace{-0.5mm}+\hspace{-2.5mm}\sum_{j=n+1}^J\hspace{-2.25mm}c_j\Big|\Big\}.
\end{split}
\end{equation}
Comparing~\eqref{eq: optimality condition} and~\eqref{eq: prox casting}, we remark that $\Gamma(v)$ is a subgradient of $h$ at $\prox_{\lambda h}(v)$. 
Based on equation~\eqref{eq: prox sum absolute}, $\Gamma(v)$ is bounded as follows:
\begin{equation}
	\label{eq: boundedness property}
	|\Gamma(v)|\leq\sum_{j=1}^J c_j
\end{equation}
for all $v$. In fact, equality holds when $v$ belongs to $\cI_0$ in~\eqref{eq: I0} or $\cI_{J,2}$ in~\eqref{eq: IJ2}. When $v$ belongs to an interval of the form of $\cI_{n,1}$ in~\eqref{eq: In1}, we have:
\begin{equation}
\begin{split}
\Gamma(v)=\frac{v-b_n}{\lambda}\in\Big[&\sum\limits_{j=1}^{n-1}c_j-\sum\limits_{j=n}^J c_j , \sum\limits_{j=1}^{n}c_j-\sum\limits_{j=n+1}^J c_j\Big]\\
&\subset\Big[-\sum\limits_{j=1}^J c_j,\sum\limits_{j=1}^J c_j\Big],
\end{split}
\end{equation} 
and when it belongs to an interval of the form of $\cI_{n,2}$ in~\eqref{eq: In2}, we have:
\begin{equation}
\Gamma(v)=\sum\limits_{j=1}^{n}c_j-\sum\limits_{j=n+1}^J c_j\in\Big[-\sum\limits_{j=1}^J c_j,\sum\limits_{j=1}^J c_j\Big].
\end{equation} 
We note that the upper bound in $\eqref{eq: boundedness property}$ is independent of $\lambda$. 
Using~\eqref{eq: prox casting}, the $m$-th entry of $\prox_{\eta\mu_k g_{k,i+1}}(\bphi_k(i+1))$ in~\eqref{eq: fully separable prox} can be written as:
\begin{equation}
\begin{split}
	\label{eq: prox casting 2}
	[&\prox_{\eta\mu_k g_{k,i+1}}(\bphi_k(i+1))]_m\\
	&=[\bphi_k(i+1)]_m-\eta\mu_k\Gamma_{k,m,i+1}([\bphi_k(i+1)]_m).
\end{split}
\end{equation}
Note that $\Gamma_{k,m,i+1}([\bphi_k(i+1)]_m)]$ is a function of the form~\eqref{eq: limiter function} where, based on~\eqref{eq: Phi_km}, the coefficients $b_j$ and $c_j$ are given by $[\bphi_\ell(i+1)]_m$ and $p_{k\ell}\,\alpha_{k\ell}^m(i)$, respectively, and the scalar $v$ corresponds to the $m$-th component of the vector $\bphi_k(i+1)$. 
Using the boundedness of $\Gamma_{k,m,i+1}(\cdot)$ in~\eqref{eq: boundedness property}, we obtain:
\begin{equation}
	\label{eq: bound on Gamma}
	|\Gamma_{k,m,i+1}([\bphi_k(i+1)]_m)|\leq\sum_{\ell\in\N_k\setminus\C(k)}p_{k\ell}\,\alpha_{k\ell}^m(i)\,\triangleq s_k^m(i)
\end{equation}
for all $[\bphi_k(i+1)]_m$. For the $\ell_1$-norm~\eqref{eq: l1 norm}, we have:
\begin{equation}
	s_k^m(i)=s_k\triangleq\sum_{\ell\in\N_k\setminus\C(k)}p_{k\ell},
\end{equation}
for all $i$ and $m=1,\ldots, M$. For the reweighted $\ell_1$-norm~\eqref{eq: reweighted l1 norm}, we have:
\begin{align}
	s_k^m(i)&=\sum\limits_{\ell\in\N_k\setminus\C(k)}\frac{p_{k\ell}}
	{\epsilon+|[\bdelta_{k,\ell}(i-1)]_m|}\nonumber\\
	&=\frac{1}{\epsilon}\sum\limits_{\ell\in\N_k\setminus\C(k)}\frac{p_{k\ell}}
	{1+\frac{|[\bdelta_{k,\ell}(i-1)]_m|}{\epsilon}}\nonumber\\
	&\leq \frac{s_k}{\epsilon}
\end{align}
for all $i$ and $m=1,\ldots, M$. Using~\eqref{eq: prox casting 2}, the proximal operator of $\eta\mu_k g_{k,i+1}$ can be written as:
\begin{equation}
	\label{eq: prox casting 3}
	\prox_{\eta\mu_k g_{k,i+1}}(\bphi_k(i+1))=\bphi_k(i+1)-\eta\mu_k\bGamma_{k,i+1}(\bphi_k(i+1)),
\end{equation}
where $\bGamma_{k,i+1}(\bphi_k(i+1))$ is the $M\times1$ vector given by:
\begin{equation}
\begin{split}
	\label{eq: vector Gamma_{k,i+1}}
	&\bGamma_{k,i+1}(\bphi_k(i+1))\\
	&=\col\Big\{\Gamma_{k,1,i+1}([\bphi_k(i+1)]_1,\ldots,
	\Gamma_{k,M,i+1}([\bphi_k(i+1)]_M)\Big\}.
\end{split}
\end{equation}
As a consequence, the $\ell_2$-norm of the vector $\bGamma_{k,i+1}(\cdot)$ can be bounded as:
\begin{eqnarray}
	\|\bGamma_{k,i+1}(\cdot)\|_2 \hspace{-2mm}&\leq& \hspace{-2mm}s_k\sqrt{M},\text{for the }\ell_1\text{-norm},\label{eq: bound l1}\\
	\|\bGamma_{k,i+1}(\cdot)\|_2 \hspace{-2mm}&\leq& \hspace{-2mm}\frac{s_k\sqrt{M}}{\epsilon},\text{for the reweighted }\ell_1\text{-norm}.\label{eq: bound rew l1}
\end{eqnarray}


\section{Stability analysis}
\label{sec: Mean-Square performance analysis}

\subsection{Error vector recursion}

We shall now analyze the stability of the multitask diffusion algorithm~\eqref{eq: ATC FBS} in the mean and mean-square-error sense. We first define at node $k$ and iteration $i$ the weight error vector $\bwt_k(i)\triangleq\bw^o_k-\bw_k(i)$ and the intermediate error vector $\bphit_k(i)\triangleq\bw^o_k-\bphi_k(i)$. Furthermore, we introduce the network vectors:
\begin{eqnarray}
	\bwt(i)	&\triangleq	&\col\left\{\bwt_1(i),\ldots,\bwt_N(i)\right\}\label{eq: network error vector 1}\\
	\bphi(i)	&\triangleq	&\col\left\{\bphi_1(i),\ldots,\bphi_N(i)\right\}\\
	\bphit(i)	&\triangleq	&\col\left\{\bphit_1(i),\ldots,\bphit_N(i)\right\}.
\end{eqnarray}
Let $\cM$ and $\cR_{\bx}(i)$ be the $MN \times MN$ block diagonal matrices defined as:
\begin{eqnarray}
	\cM&\triangleq&\diag\left\{\mu_k\bI_M\right\}_{k=1}^N \\
	\cR_{\bx}(i)&\triangleq&\diag\,\Big\{\sum_{\ell\in\N_k\cap\C(k)}c_{\ell k}\,\bx_{\ell}(i)\,\bx^\top_{\ell}(i)\Big\}_{k=1}^N
\end{eqnarray}
and $\bp_{zx}(i)$ be the $MN \times 1$ block vector defined as:
\begin{equation}
	\bp_{zx}(i)
	\triangleq\cCT\col\big\{\bx_{k}(i)\,z_{k}(i)\big\}_{k=1}^N,
\end{equation}
where $\cC\triangleq\bC\otimes\bI_M$ and $\bC$ is the $N\times N$ right-stochastic matrix whose $\ell k$-th entry is $c_{\ell k}$. Let $\cA\triangleq\bA\otimes\bI_M$ where $\bA$ is the $N\times N$ left-stochastic matrix  whose $\ell k$-th entry is $a_{\ell k}$. Subtracting $\bw^o_k$ from both sides of the first and second step in~\eqref{eq: ATC FBS}, and using the linear data model \eqref{eq: linear data model}, we obtain:
\begin{equation}
	\label{eq: error recursion phi}
	\bphit(i+1)=\cAT[\bI_{MN}-\cM\cR_{\bx}(i)]\bwt(i)-\cAT\cM\bp_{z\bx}(i).
\end{equation} 
Subtracting $\bw^o_k$ from both sides of the third step in~\eqref{eq: ATC FBS}, and using result~\eqref{eq: prox casting 3}, we get:
\begin{equation}
	\bwt_k(i+1)=\bphit_k(i+1)+\eta\mu_k\,\bGamma_{k,i+1}(\bphi_k(i+1)).
\end{equation}
Hence, the network error vector for the diffusion strategy~\eqref{eq: ATC FBS} evolves according to the following recursion:
\begin{equation}
	\label{eq: network error vector recursion}
	\fbox{$\begin{split}
	\bwt(i+1)=\cAT[\bI_{MN}-&\cM\cR_{\bx}(i)]\,\bwt(i)-\cAT\cM\bp_{z\bx}(i)+\\
	&\eta\,\cM\bGamma_{i+1}(\bphi(i+1)),
	\end{split}$}
\end{equation}
where $\bGamma_{i+1}(\bphi(i+1))$ is the $N\times 1$ block vector whose $k$-th block is given by~\eqref{eq: vector Gamma_{k,i+1}}, namely,
\begin{equation}
	\label{eq: Gamma i}
	\bGamma_{i+1}(\bphi(i+1))\triangleq\col\Big\{\bGamma_{k,i+1}(\bphi_k(i+1))\Big\}_{k=1}^N.
\end{equation}
In order to make the presentation clearer, we shall use the following notation for terms in recursion~\eqref{eq: network error vector recursion}:
\begin{eqnarray}
	\cB(i)&\triangleq	&\cAT[\bI_{MN}-\cM\cR_{\bx}(i)],\\
	\bg(i)&\triangleq	&\cAT\cM\bp_{z\bx}(i), \\
	\br(i+1)&\triangleq	&\eta\,\cM\bGamma_{i+1}(\bphi(i+1)).
\end{eqnarray}
Hence, recursion~\eqref{eq: network error vector recursion} can be rewritten as follows:
\begin{equation}
	\label{eq: network error vector recursion 2}
	\bwt(i+1)=\cB(i)\bwt(i)-\bg(i)+\br(i+1).
\end{equation}

Before proceeding, let us introduce the following assumptions on the regression data and  step-sizes.
\begin{assumption}{(Independent regressors)}
\label{assumption: independent regressors}
The regression vectors $\bx_k(i)$ arise from a zero-mean random process that is temporally white and spatially independent.
\end{assumption}
It follows that $\bx_k(i)$ is independent of $\bw_\ell(j)$ for $i\geq j$ and for all $\ell$. This assumption is commonly used in adaptive filtering since it helps simplify the analysis. Furthermore, performance results obtained under this assumption match well the actual performance of stand alone filters for sufficiently small step-sizes~\cite{sayed2011adaptive}.
\begin{assumption}{(Small step-sizes)}
\label{assumption: step-sizes assumption}
The step-sizes $\mu_k$ are sufficiently small so that terms that depend on higher order powers of the step-sizes can be ignored.
\end{assumption}


\subsection{Mean behavior analysis}

Taking the expectation of both sides of~\eqref{eq: network error vector recursion 2}, using Assumption~\ref{assumption: independent regressors}, and $\expec\{\bp_{z\bx}(i)\}=0$, we obtain that the mean error vector evolves according to the following recursion:
\begin{equation}
	\label{eq: mean error recursion}
	\expec\{\bwt(i+1)\}=\cB\,\expec\{\bwt(i)\}+\expec\{\br(i+1)\},
\end{equation}
where
\begin{eqnarray}
\cB\hspace{-2mm}&\triangleq\hspace{-2mm}&\cAT(\bI_{MN}-\cM\cR_{\bx}),\\
\cR_{\bx}\hspace{-2mm}&\triangleq\hspace{-2mm}&\expec\{\cR_{\bx}(i)\}\hspace{-0.5mm}=\diag\Big\{\hspace{-1.5mm}\sum_{\ell\in\N_k\cap\C(k)}\hspace{-5mm}c_{\ell k}\bR_{\bx,\ell}\Big\}_{k=1}^N\\
\expec\{\br(i+1)\}\hspace{-2mm}&\triangleq\hspace{-2mm}&\eta\cM\expec\{\bGamma_{i+1}(\bphi(i+1))\}.\label{eq: expec r(i+1)}
\end{eqnarray}
The following theorem guarantees the mean stability of the multitask diffusion LMS~\eqref{eq: ATC FBS} with forward-backward splitting.

Recall that the block maximum norm of an $N\times 1$ block vector $\bx=\col\{\bx_k\}_{k=1}^N$ and the induced block maximum norm of an $N\times N$ block matrix $\cX$ are defined as~\cite{sayed2014diffusion}:
\begin{equation}
	\begin{split}
	&\|\bx\|_{b,\infty}=\max\limits_{1\leq k \leq N} \|\bx_k\|_2,\\
	&\|\cX\|_{b,\infty}=\max\limits_{\bx} \frac{\|\cX\bx\|_{b,\infty}}{\|\bx\|_{b,\infty}},
	\end{split}
\end{equation}

\begin{theorem}
\label{theorem: stability in the mean}
\textbf{\emph{(Stability in the mean)}} Assume data model~\eqref{eq: linear data model} and Assumption \ref{assumption: independent regressors} hold. Then, for any initial conditions, the multitask diffusion strategy~\eqref{eq: ATC FBS} converges in the mean to a small bounded region of the order of $\mu_{\max}$, i.e., $\lim_{i\rightarrow\infty} \expec\{\|\bwt(i)\|_{b,\infty}\}\;=\;O(\mu_{\max})$, if the step-sizes are chosen such that:
\begin{equation}
	\label{eq: step-size condition}
	0<\mu_k<\frac{2}{\lambda_{\max}(\sum_{\ell\in\N_k\cap\C(k)}c_{\ell k}\bR_{\bx,\ell})},\quad k=1,\ldots,N,
\end{equation}
where $\mu_{\max}\triangleq\max_{1\leq k\leq N} \mu_k$ and $\lambda_{\max}(\cdot)$ is the maximum eigenvalue of its matrix argument. The block maximum norm of the bias can be upper bounded as:
\begin{eqnarray}
	\lim_{i\rightarrow\infty}\|\expec\{\bwt(i)\}\|_{b,\infty}&\leq& 
	\frac{\eta\,\mu_{\max}\,s_{\max}\sqrt{M}}{1-\|\cB\|_{b,\infty}},\label{eq: bound l1 2}\\
	\lim_{i\rightarrow\infty}\|\expec\{\bwt(i)\}\|_{b,\infty}& \leq&
	\frac{1}{\epsilon}\cdot\frac{\eta\,\mu_{\max}\,s_{\max}\sqrt{M}}{1-\|\cB\|_{b,\infty}},\label{eq: bound rew l1 2}
\end{eqnarray}
for the $\ell_1$-norm and the reweighted $\ell_1$-norm, respectively.

\end{theorem}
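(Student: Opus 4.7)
The plan is to take expectations in the error recursion~\eqref{eq: network error vector recursion 2}, apply the block-maximum norm $\|\cdot\|_{b,\infty}$, and reduce the problem to a scalar linear recursion of the form $u(i+1)\leq\gamma\,u(i)+\rho$ whose asymptotic limit is $\rho/(1-\gamma)$ provided $0<\gamma<1$ and $\rho$ is uniformly bounded in $i$. Applying $\|\cdot\|_{b,\infty}$ to the mean recursion~\eqref{eq: mean error recursion} and using the triangle inequality together with submultiplicativity immediately yields $\|\expec\{\bwt(i+1)\}\|_{b,\infty}\leq \|\cB\|_{b,\infty}\cdot\|\expec\{\bwt(i)\}\|_{b,\infty}+\|\expec\{\br(i+1)\}\|_{b,\infty}$, which is the desired scalar recursion.

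Next I would show that $\|\cB\|_{b,\infty}<1$ under~\eqref{eq: step-size condition}. Since $\bA$ is left-stochastic, $\cAT=\bA^\top\otimes\bI_M$ has unit block-maximum norm by a standard Kronecker-product identity. The factor $\bI_{MN}-\cM\cR_{\bx}$ is block-diagonal, so its block-maximum norm equals $\max_k\|\bI_M-\mu_k\sum_{\ell\in\N_k\cap\C(k)}c_{\ell k}\bR_{\bx,\ell}\|_2$. Each inner matrix is symmetric positive definite, so its spectral norm is strictly less than one exactly when $\mu_k$ satisfies~\eqref{eq: step-size condition}; taking the maximum over $k$ and combining with submultiplicativity gives $\|\cB\|_{b,\infty}<1$.

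The perturbation $\|\expec\{\br(i+1)\}\|_{b,\infty}$ is then bounded uniformly in $i$ using the deterministic bounds~\eqref{eq: bound l1}--\eqref{eq: bound rew l1} on $\bGamma_{k,i+1}$ derived in Section~II.D. Using Jensen's inequality $\|\expec X\|\leq\expec\|X\|$ together with the block-diagonal structure of $\cM$, one has $\|\expec\{\br(i+1)\}\|_{b,\infty}\leq\eta\mu_{\max}\,\expec\{\max_k\|\bGamma_{k,i+1}(\bphi_k(i+1))\|_2\}\leq\eta\mu_{\max}s_{\max}\sqrt{M}$ in the $\ell_1$-norm case, with an extra factor $1/\epsilon$ in the reweighted case. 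Iterating the scalar recursion with $\gamma=\|\cB\|_{b,\infty}$ and these uniform bounds for $\rho$, and letting $i\to\infty$, immediately yields~\eqref{eq: bound l1 2}--\eqref{eq: bound rew l1 2}.

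The main obstacle is the contraction estimate $\|\cB\|_{b,\infty}<1$: it hinges on the left-stochasticity of $\bA$ (which delivers $\|\cAT\|_{b,\infty}=1$) and on the reduction of the block-maximum norm of a block-diagonal matrix to the maximum of ordinary spectral norms. Both facts are standard in the diffusion literature, so the remaining work is essentially careful book-keeping. A minor subtlety worth noting is that the statement $\expec\{\|\bwt(i)\|_{b,\infty}\}=O(\mu_{\max})$ does not follow directly from the mean recursion, which controls $\|\expec\{\bwt(i)\}\|_{b,\infty}$ rather than $\expec\{\|\bwt(i)\|_{b,\infty}\}$; what the above argument actually establishes are precisely the bounds~\eqref{eq: bound l1 2}--\eqref{eq: bound rew l1 2} on the former quantity.
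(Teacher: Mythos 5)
Your proposal is correct and follows essentially the same route as the paper: both arguments rest on the contraction $\|\cB\|_{b,\infty}<1$ under~\eqref{eq: step-size condition} and the uniform bounds~\eqref{eq: bound l1}--\eqref{eq: bound rew l1} on the perturbation $\br(i+1)$, with the paper iterating the vector recursion first and summing a geometric series while you take norms first and iterate the resulting scalar recursion. Your closing remark is also apt: the paper's proof, like yours, controls $\|\expec\{\bwt(i)\}\|_{b,\infty}$ rather than $\expec\{\|\bwt(i)\|_{b,\infty}\}$ as written in the theorem statement.
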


\begin{proof}
Iterating \eqref{eq: mean error recursion} starting from $i=0$, we arrive to the following expression:
\begin{equation}
	\label{eq: mean error recursion 2}
	\expec\{\bwt(i+1)\}=
	\cB^{i+1}\expec\{\bwt(0)\}+\sum_{j=0}^i \cB^j\,\expec\{\br(i+1-j)\},
\end{equation}
where $\expec\{\bwt(0)\}$ is the initial condition. $\expec\{\bwt(i+1)\}$ converges when $i\rightarrow\infty$ if, and only if, both terms on the RHS of~\eqref{eq: mean error recursion 2} converges to finite values. The first term converges to zero as $i\rightarrow \infty$ if the matrix $\cB$ is stable. A sufficient condition to ensure the stability of $\cB$ is to choose the step-sizes according to~\eqref{eq: step-size condition} (the proof can be obtained using the same arguments as~\cite[Theorem 5.1]{sayed2014diffusion}). We shall now prove the convergence of the second term on the RHS of~\eqref{eq: mean error recursion 2}. To prove the convergence of the series $\sum_{j=0}^{+\infty} \cB^j\,\expec\{\br(i+1-j)\}$, it is sufficient to prove that the series $\sum_{j=0}^{+\infty} [\cB^j\,\expec\{\br(i+1-j)\}]_k$ converges for $k=1,\ldots,MN$. A series is absolutely convergent if each term of the series can be bounded by a term of an absolutely convergent series~\cite{Lorenzo2013sparse}. Since the block maximum norm of a block vector is greater than or equal to the largest absolute value of its entries, each term $\big|[\cB^j\,\expec\{\br(i+1-j)\}]_k\big|$ can be bounded as:
\begin{align}
	\big|[\cB^j\,\expec\{\br(i+1-j)\}]_k\big|&\leq \|\cB\|_{b,\infty}^j   \cdot\|\expec\{\br(i+1-j)\}\|_{b,\infty}\nonumber\\
	&\leq \|\cB\|_{b,\infty}^j   r_{\max}.
\end{align}
The quantity $\|\expec\{\br(i+1-j)\}\|_{b,\infty}$ is finite for all $i$ and $j$ and bounded by some constant $r_{\max}=\cO(\mu_{\max})$. In fact, from~\eqref{eq: expec r(i+1)}, we have:
\begin{equation}
\label{eq: bound on r}
	\|\expec\{\br(i+1)\}\|_{b,\infty}\leq\eta\mu_{\text{max}}\|\expec\{\bGamma_{i+1}(\bphi(i+1))\}\|_{b,\infty}
\end{equation}
since $\|\cM\|_{b,\infty}=\mu_{\text{max}}$. Using~\eqref{eq: bound l1}--\eqref{eq: bound rew l1}, the block maximum norm of $\bGamma_{i+1}(\bphi(i+1))$ in~\eqref{eq: Gamma i} can be bounded as:
\begin{eqnarray}
	\|\bGamma_{i+1}(\bphi(i+1))\|_{b,\infty}\hspace{-2.5mm}&\leq\hspace{-2.5mm}& s_{\text{max}}\sqrt{M},
	(\ell_1\text{-norm})\label{eq: bound l1 2}\\
	\|\bGamma_{i+1}(\bphi(i+1))\|_{b,\infty}\hspace{-2.5mm}& \leq\hspace{-2.5mm}& \frac{s_{\text{max}}\sqrt{M}}{\epsilon},
	(\text{rew. } \ell_1\text{-norm})\label{eq: bound rew l1 2}
\end{eqnarray}
for all $i$, where $s_\text{max}=\max\limits_{1\leq k\leq N} s_k$. If the step-sizes are chosen according to~\eqref{eq: step-size condition}, the series $\sum_{j=0}^{+\infty} \|\cB\|_{b,\infty}^j r_{\text{max}}$ is absolutely convergent. Therefore, the series $\sum_{j=0}^{+\infty} [\cB^j\,\expec\{\br(i+1-j)\}]_k$ is an absolutely convergent series.

Note that when $i\rightarrow\infty$, the block maximum norm of the bias can be bounded as
\begin{align}
	\lim_{i\rightarrow\infty}\|\expec\{\bwt(i)\}\|_{b,\infty}
	&=\lim_{i\rightarrow\infty}\Big\|\sum_{j=0}^i \cB^j\,\expec\{\br(i+1-j)\}\Big\|_{b,\infty}\nonumber\\
	&\leq\lim_{i\rightarrow\infty}\sum_{j=0}^\infty\|\cB^j\,\expec\{\br(i+1-j)\}\|_{b,\infty}\nonumber\\
	&\leq\lim_{i\rightarrow\infty}\sum_{j=0}^\infty\|\cB\|_{b,\infty}^j r_{\text{max}}=\frac{r_{\text{max}}}{1-\|\cB\|_{b,\infty}},
\end{align}
\end{proof}


\subsection{Mean-square-error stability}

We examine the mean-square-error stability by studying the convergence of the weighted variance $\expec\{\|\bwt(i)\|_{\bSig}^2\}$, where $\bSig$ is a positive semi-definite matrix that we are free to choose. Evaluating the variance, we obtain:
\begin{equation}
\begin{split}
	\label{eq: variance relation 1}
	\expec\{\|\bwt(i+1)\|^2_{\bSig}\}=&\expec\{\|\bwt(i)\|^2_{\bSig'}\}+\expec\{\|\bg(i)\|^2_{\bSig}\}+\\
	&\varphi(\br(i+1),
	\bSig,\cB(i),\bwt(i),\bg(i)),
\end{split}
\end{equation}
where $\bSig'\triangleq\expec\{\cB^\top(i)\bSig\cB(i)\}$ and 
\begin{equation}
\begin{split}
	\label{eq: term resulting from cooperation 1}
	\varphi(\br&(i+1),\cB(i),\bwt(i),\bg(i))	=\expec\{\|\br(i+1)\|^2_{\bSig}\}+\\
	&2\expec\{\br^\top(i+1)\bSig\cB(i)\bwt(i)\}-2\expec\{\br^\top(i+1)\bSig\bg(i)\}
\end{split}
\end{equation}
is a term coming from promoting relationships between clusters. The last two terms on the RHS of \eqref{eq: term resulting from cooperation 1} contain higher-order powers of the step-sizes. Using Assumption~\ref{assumption: step-sizes assumption}, we get the following approximation:
\begin{equation}
\label{eq: term resulting from cooperation}
\varphi(\br(i+1),\bwt(i))\approx\expec\{\|\br(i+1)\|^2_{\bSig}\}+2\expec\{\br^\top(i+1)\bSig\cB\bwt(i)\}
\end{equation}
Let $\bsig\triangleq\vc(\bSig)$ and $\bsig' \triangleq\vc(\bSig')$ where the $\vc(\cdot)$ operator stacks the columns of a matrix on top of each other. We will use the notation $\|\bwt\|_{\bsig}^2$ and $\|\bwt\|_{\bSig}^2$ interchangeably to denote the same quantity $\bwt^\top\bSig\bwt$. Using the property $\vc(\bU\bSig\bW)=(\bW^\top\otimes\bU)\vc(\bSig)$, the relation between $\bsig'$ and $\bsig$ can be expressed in the following form:
\begin{equation}
	\bsig'=\cF\bsig,
\end{equation}
where $\cF$ is the $(LN)^2\times(LN)^2$ matrix given by:
\begin{equation}
	\label{eq: matrix F}
	\cF\triangleq\expec\{\cB^\top(i)\otimes\cB^\top(i)\}\approx\cB^\top\otimes\cB^\top.
\end{equation}
The approximation in~\eqref{eq: matrix F} is reasonable under Assumption~\ref{assumption: step-sizes assumption}\cite{sayed2014diffusion}. Introducing the matrix $\bG$:
\begin{equation}
	\bG\triangleq\expec\{\bg(i)\bg^\top(i)\}=\cAT\cM\,\cCT\diag\{\bR_{\bx,k}\sigma^2_{z,k}\}_{k=1}^N\cC\cM\cA
\end{equation}
and using the property $\tr(\bSig\bX)=[\vc(\bX^\top)]^\top\vc(\bSig)$, the second term on the RHS of~\eqref{eq: variance relation 1} can be written as:
\begin{equation}
	\expec\{\|\bg(i)\|^2_{\bSig}\}=[\vc(\bG^\top)]^\top\bsig.
\end{equation}
Hence, the variance recursion~\eqref{eq: variance relation 1} can be expressed as
\begin{equation}
\begin{split}
	\label{eq: variance relation 2} 
	\expec\{\|\bwt(i+1)\|^2_{\bsig}\}=\expec\{&\|\bwt(i)\|^2_{\cF\bsig}\}+[\vc(\bG^\top)]^\top\bsig+\\
	&\varphi(\br(i+1),\bsig,\bwt(i)).
\end{split}
\end{equation}

\begin{theorem}
\label{theorem: stability in the mean-square}
\textbf{\emph{(Mean-square-error Stability)}}{ Assume data model~\eqref{eq: linear data model} and Assumptions~\ref{assumption: independent regressors} and~\ref{assumption: step-sizes assumption} hold. Then, for any initial conditions, the multitask diffusion strategy~\eqref{eq: ATC FBS} is mean-square stable if the error recursion~\eqref{eq: network error vector recursion} is mean stable and the matrix $\cF$ is stable. Using the approximation~\eqref{eq: matrix F}, the matrix $\cF$ is stable if the step-sizes satisfy~\eqref{eq: step-size condition}.}
\end{theorem}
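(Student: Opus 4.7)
The plan is to establish mean-square stability by iterating the weighted variance recursion~\eqref{eq: variance relation 2} and showing that each of the three resulting series (the homogeneous part driven by $\cF^{i+1}$, the noise-driven part driven by $\bG$, and the perturbation part driven by $\br$) converges as $i\to\infty$ under the stated hypotheses. Iterating~\eqref{eq: variance relation 2} from $i=0$ gives an expression of the form
\begin{equation}
\expec\{\|\bwt(i+1)\|^2_{\bsig}\}=\expec\{\|\bwt(0)\|^2_{\cF^{i+1}\bsig}\}+\sum_{j=0}^{i}[\vc(\bG^\top)]^\top\cF^j\bsig+\sum_{j=0}^{i}\varphi(\br(i+1-j),\cF^j\bsig,\bwt(i-j)).
\end{equation}
If $\cF$ is stable, then $\cF^{i+1}\to 0$ and $\sum_{j=0}^{i}\cF^j\to(\bI-\cF)^{-1}$, so the first term vanishes and the second term converges to a bounded steady-state value for any $\bsig\succeq 0$. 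What remains is to prove that the third (perturbation) sum is also bounded, which is the core technical obstacle.

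To establish stability of $\cF$ under~\eqref{eq: step-size condition}, I would invoke the approximation $\cF\approx\cB^\top\otimes\cB^\top$ from~\eqref{eq: matrix F} and use the standard Kronecker identity $\rho(\cB^\top\otimes\cB^\top)=\rho(\cB)^2$. Since the argument in the proof of Theorem~\ref{theorem: stability in the mean} already shows that the step-size condition~\eqref{eq: step-size condition} ensures $\rho(\cB)<1$ (by Theorem~5.1 of~\cite{sayed2014diffusion} applied block-entrywise to $\cAT(\bI-\cM\cR_{\bx})$), this immediately yields $\rho(\cF)<1$. This handles the second conclusion of the theorem with no real difficulty beyond recalling the mean-stability argument.

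The main obstacle is controlling the perturbation sum, i.e.\ bounding
\begin{equation}
\sum_{j=0}^{i}\Big|\expec\{\|\br(i+1-j)\|^2_{\cF^j\bsig}\}+2\,\expec\{\br^\top(i+1-j)(\cF^j\bsig)^{\sharp}\cB\bwt(i-j)\}\Big|
\end{equation}
uniformly in $i$, where $(\cdot)^{\sharp}$ denotes the matrix reshape of the vectorized weight. I would handle this in two moves. First, the bound~\eqref{eq: bound l1 2}--\eqref{eq: bound rew l1 2} gives $\|\bGamma_{i+1}(\bphi(i+1))\|_{b,\infty}\le s_{\max}\sqrt{M}$ (up to a factor $1/\epsilon$ in the reweighted case), so $\|\br(i+1)\|$ is deterministically $O(\mu_{\max})$; hence $\expec\{\|\br(i+1-j)\|^2_{\cF^j\bsig}\}$ is bounded by a constant times $\mu_{\max}^2\|\cF^j\|\cdot\|\bsig\|$, and summability of $\|\cF^j\|$ (from $\rho(\cF)<1$) makes the quadratic part a convergent geometric series. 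Second, the cross term is bounded by a constant times $\mu_{\max}\|\cF^j\bsig\|\cdot\expec\{\|\bwt(i-j)\|_{b,\infty}\}$; invoking the mean-stability conclusion of Theorem~\ref{theorem: stability in the mean}, which gives $\expec\{\|\bwt(i-j)\|_{b,\infty}\}=O(\mu_{\max})$ uniformly in $i-j\ge 0$, together with the geometric decay of $\|\cF^j\bsig\|$, yields a convergent majorant. Under Assumption~\ref{assumption: step-sizes assumption}, the $O(\mu_{\max}^2)$ contributions are consistent with the order-of-magnitude approximations already used in~\eqref{eq: term resulting from cooperation}, so no additional refinement is needed.

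Combining these pieces, $\limsup_{i\to\infty}\expec\{\|\bwt(i)\|^2_{\bsig}\}$ is finite for every $\bsig\succeq 0$, which is the definition of mean-square stability, and the sufficient condition reduces to~\eqref{eq: step-size condition} as claimed. I expect the only delicate bookkeeping to be the uniform-in-$i$ control of the cross term, which essentially requires propagating the $O(\mu_{\max})$ bound on $\expec\{\|\bwt(i)\|_{b,\infty}\}$ from Theorem~\ref{theorem: stability in the mean} into the second-moment recursion; everything else is a direct consequence of the Kronecker structure of $\cF$ and the deterministic boundedness of $\bGamma_{i+1}(\cdot)$ proved in Section~\ref{subsec: proximal operator}.
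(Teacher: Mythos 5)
Your proposal follows essentially the same route as the paper's proof: iterate the variance recursion, bound the perturbation term $\varphi$ uniformly by combining the deterministic bound on $\bGamma_{i+1}(\cdot)$ (for the quadratic part) with the mean-stability result (for the cross term), and conclude via geometric summability of $\|\cF^j\|$ from the stability of $\cF\approx\cB^\top\otimes\cB^\top$. The only cosmetic difference is that the paper first absorbs $|\varphi|$ into a constant multiple $t$ of the noise term $[\vc(\bG^\top)]^\top\bsig$ before iterating, whereas you carry the $\cF^j$-weighted perturbation sum explicitly; the ingredients and conclusion are identical.
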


\begin{proof}
Since $\bSig$ is a positive semi-definite matrix and the vector $\br(i+1)$ is uniformly bounded for all $i$, $\expec\{\|\br(i+1)\|^2_{\bSig}\}$ can be bounded as
\begin{equation}
	\label{eq: bound on term 1 of f}
	0\leq\expec\{\|\br(i+1)\|^2_{\bSig}\}\leq \kappa_1
\end{equation}
for all $i$, where $\kappa_1$ is a positive constant. Since $\br(i+1)$ is uniformly bounded for all $i$, the vector $2\br^\top(i+1)\bSig\,\cB$ is also bounded for all $i$. Let $\gamma_{\text{max}}$ be a bound on the largest component of $2\br^\top(i+1)\bSig\,\cB$ in absolute value for all $i$. We obtain
\begin{align}
	2|\expec\{\br^\top(i+1)\bSig\cB\bwt(i)\}|&\leq \gamma_{\max}\nonumber
	\sum_{\ell=1}^{MN}\big| \expec\big\{\bwt_\ell(i)\big\}\big|\\
	&=\gamma_{\max}\cdot
	\| \expec\big\{\bwt(i)\big\}\|_1.
\end{align}
Under condition \eqref{eq: step-size condition} on the step-sizes, the mean error vector $\expec\{\bwt(i)\}$ converges to a small bounded region as $i\rightarrow\infty$. Hence, $\| \expec\{\bwt(i)\}\|_1$ can be upper bounded by some positive constant scalar $\kappa_2$ for all $i$, and using the approximation~\eqref{eq: term resulting from cooperation},  $|\varphi(\br(i+1),\bsig,\bwt(i))|$ satisfies:
\begin{equation}
	\label{eq: bound on f}
	|\varphi(\br(i+1),\bsig,\bwt(i))|\leq \kappa_1+\gamma_{\max}\kappa_2
\end{equation}
for all $i$. The positive constant $\kappa_3\triangleq \kappa_1+\gamma_{\max}\kappa_2$ can be written as a scaled multiple of the positive quantity $[\vc(\bG^\top)]^\top\bsig$ as $\kappa_3=t[\vc(\bG^\top)]^\top\bsig$ where $t\geq 0$~\cite{Lorenzo2013sparse}. We arrive at the following inequality for \eqref{eq: variance relation 2}: 
\begin{equation}
	\label{eq: variance relation 3} 
	\expec\{\|\bwt(i+1)\|^2_{\bsig}\}\leq\expec\{\|\bwt(i)\|^2_{\cF\bsig}\}+(1+t)\cdot[\vc(\bG^\top)]^\top\bsig.
\end{equation}
Iterating~\eqref{eq: variance relation 3} starting from $i=0$, we obtain
\begin{equation}
\begin{split}
	\label{eq: variance relation 4} 
	&\expec\{\|\bwt(i+1)\|^2_{\bsig}\}\\
	&\leq\expec\{\|\bwt(0)\|^2_{\cF^{i+1}\bsig}\}
	+(1+t)[\vc(\bG^\top)]^\top\sum_{j=0}^i\cF^j\bsig,
\end{split}
\end{equation} 
where $\expec\{\|\bwt(0)\|^2\}$ is the initial condition. If we show that the RHS of \eqref{eq: variance relation 4} converges, then $\expec\{\|\bwt(i+1)\|^2_{\bsig}\}$ is stable. The first term on the RHS of~\eqref{eq: variance relation 4} vanishes as $i\rightarrow\infty$ if the matrix $\cF$ is stable. Consider now the second term on the RHS of~\eqref{eq: variance relation 4}. The series $\sum_{j=0}^\infty \cF^j\bsig$ converges if $\sum_{j=0}^\infty[ \cF^j\bsig]_k$ converges for $k=1,\ldots,(MN)^2$. Each term of the series can be bounded as
\begin{equation}
	\label{eq: equation 1}
	[\cF^j\bsig]_k\leq|[\cF^j\bsig]_k|\leq\|\cF^j\bsig\|_{b,\infty}\leq\|\cF^j\|_{b,\infty}\cdot\|\bsig\|_{b,\infty}.
\end{equation}
Since $\cF$ is stable, there exists a submultiplicative norm\footnote{
The norm $\|\cdot\|_{\rho}$ is called submultiplicative if for any square matrices $\bU$ and $\bW$ of compatible dimensions we have: $\|\bU\bW\|_{\rho}\leq\|\bU\|_{\rho}\cdot\|\bW\|_{\rho}$.}
$\|\cdot\|_{\rho}$ such that $\|\cF\|_{\rho}=\zeta<1$. All norms are equivalent in finite dimensional vector spaces. Thus, we have: 
\begin{equation}
	\|\cF^j\|_{b,\infty}\leq \tau\|\cF^j\|_{\rho}\leq\tau\|\cF\|^j_{\rho}=\tau\zeta^j,
\end{equation}
for some positive constant $\tau$. Considering this bound with~\eqref{eq: equation 1} yields:
\begin{equation}
\begin{split}
	\sum_{j=0}^\infty|[ \cF^j\bsig]_k|\leq\sum_{j=0}^\infty\|\cF^j\|_{b,\infty}\cdot\|\bsig\|_{b,\infty}
	&\leq\tau\sum_{j=0}^\infty\zeta^j\|\bsig\|_{b,\infty}\\
	&=\frac{\tau\cdot\|\bsig\|_{b,\infty}}{1-\zeta}.
\end{split}
\end{equation}
As a consequence, since the second term on the RHS of~\eqref{eq: variance relation 4} converges to a bounded region when $\cF$ is stable, $\expec\{\|\bwt(i+1)\|^2_{\bsig}\}$ also converges. 
\end{proof}


\section{Simulation results}
\label{sec: simulation results}
Before proceeding, we present a new rule for selecting the regularization weight $p_{k\ell}$ based on a measure of sparsity of the vector $\bw_k^o-\bw_\ell^o$. The intuition behind this rule is to employ a large weight $p_{k\ell}$ when the objectives at nodes $k$ and $\ell$ have few distinct entries, i.e., $\bw_k^o-\bw_{\ell}^o$ is sparse, and a small weight $p_{k\ell}$ when the objectives have few similar entries, i.e., $\bw_k^o-\bw_{\ell}^o$ is not sparse. Among other possible choices for the sparsity measure, we select a popular one based on a relationship between the $\ell_1$-norm and $\ell_2$-norm~\cite{hoyer2004non-negative}: 
\begin{equation}
\xi(\bw_k^o-\bw_\ell^o)=\frac{M}{M-\sqrt{M}}\Big(1-\frac{\|\bw_k^o-\bw_\ell^o\|_1}{\sqrt{M}\cdot\|\bw_k^o-\bw_\ell^o\|_2}\Big)~\in[0,1].
\end{equation}
The quantity $\xi(\bw_k^o-\bw_\ell^o)$ is equal to one when only a single component of $\bw_k^o-\bw_\ell^o$ is non-zero, and zero when all elements of  $\bw_k^o-\bw_\ell^o$ are relatively large~\cite{hoyer2004non-negative}. Since the nodes do not know the true objectives $\bw^o_k$ and $\bw^o_{\ell}$, we propose to replace these quantities by the available estimates at each time instant $i$ and allow the regularization factors to vary with time according to:
\begin{equation}
\label{eq: adaptive regularization factors}
p_{k\ell}(i)\propto
\left\lbrace
\begin{array}{lr}
\frac{M}{M-\sqrt{M}}\Big(1-\frac{\|\bphi_k(i+1)-\bphi_{\ell}(i+1)\|_1}{\sqrt{M}\cdot\|\bphi_k(i+1)-\bphi_{\ell}(i+1)\|_2}\Big),\\
\qquad\qquad\qquad\qquad\text{if } \ell\in{\N_k\setminus\C(k)}\\
0,\quad\qquad\qquad\qquad \text{otherwise}
\end{array}
\right.
\end{equation}
where the symbol $\propto$ denotes proportionality. As we shall see in the simulations, this rule improves the performance of the algorithm and allows agent $k$ to adapt the regularization strength $p_{k\ell}$ with respect to the sparsity level of the vector $\bw^o_k-\bw^o_{\ell}$ at time instant $i$.
 
\subsection{Illustrative example}
\label{subsec: Illustrative example}

We consider a clustered network with the topology shown in Fig.~\ref{fig: setup}(a), consisting of $20$ nodes divided into $3$ clusters: $\C_1=\{1,\ldots,10\}$, $\C_2=\{11,\ldots,15\}$, and $\C_3=\{16,\ldots,20\}$. The regression vectors $\bx_k(i)$ are $18\times1$ zero-mean Gaussian distributed vectors with covariance matrices $\bR_{\bx,k}=\sigma^2_{x,k}\bI_{18}$. The variances $\sigma^2_{x,k}$ are shown in Fig.~\ref{fig: setup}(b). The noises $z_k(i)$ are zero-mean i.i.d. Gaussian random variables, independent of any other signal, with variances $\sigma^2_{z,k}$ shown in Fig.~\ref{fig: setup}(b). 
\begin{figure}
	\centering
	\subfigure[Network topology.]{\includegraphics[trim = 10mm 10mm 10mm 11mm, clip, scale=0.25]{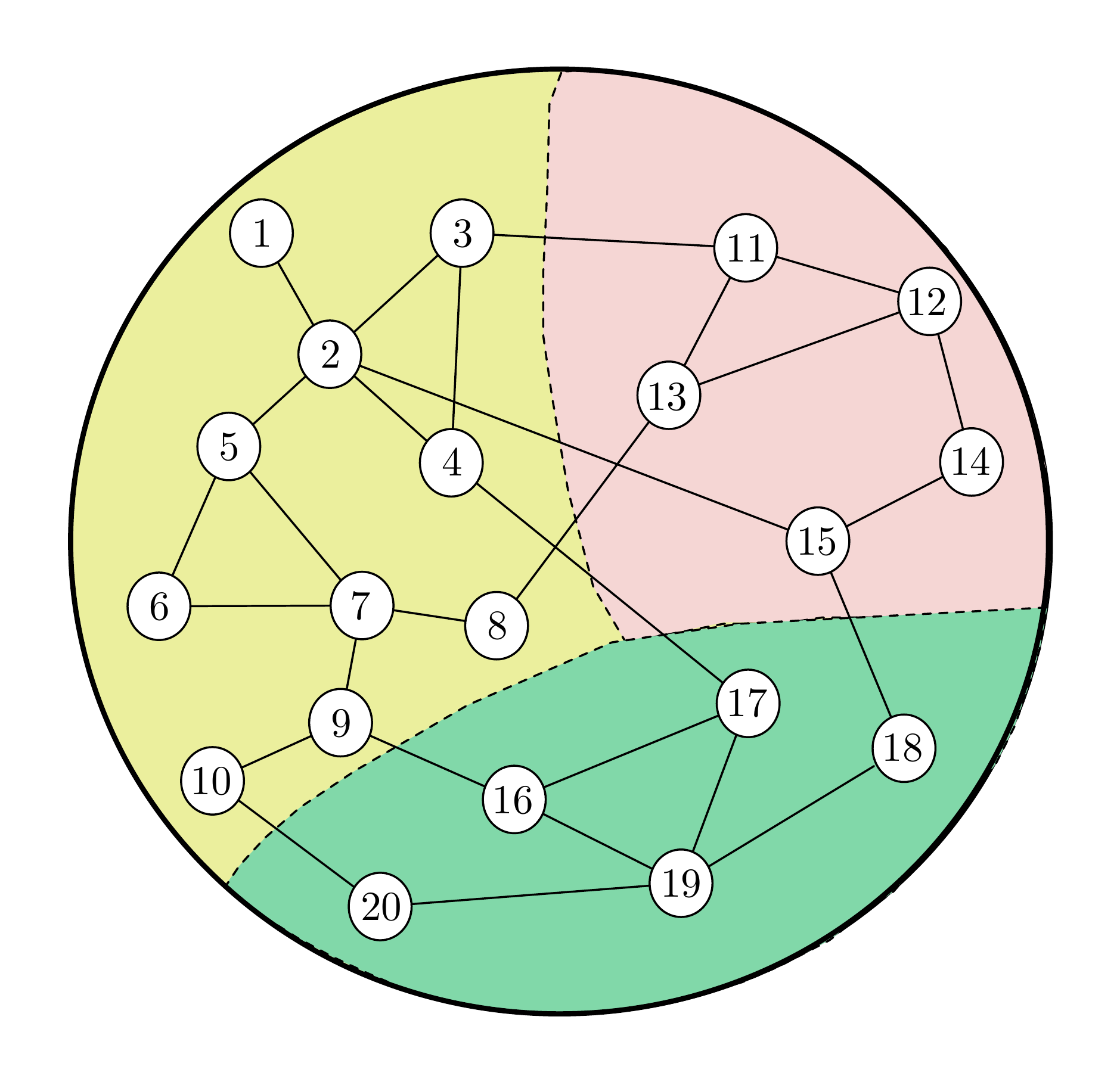}} 
	\subfigure[Regression and noise variances.]{ \includegraphics[trim = 13mm 75mm 23mm 80mm, clip,
	scale=0.25]{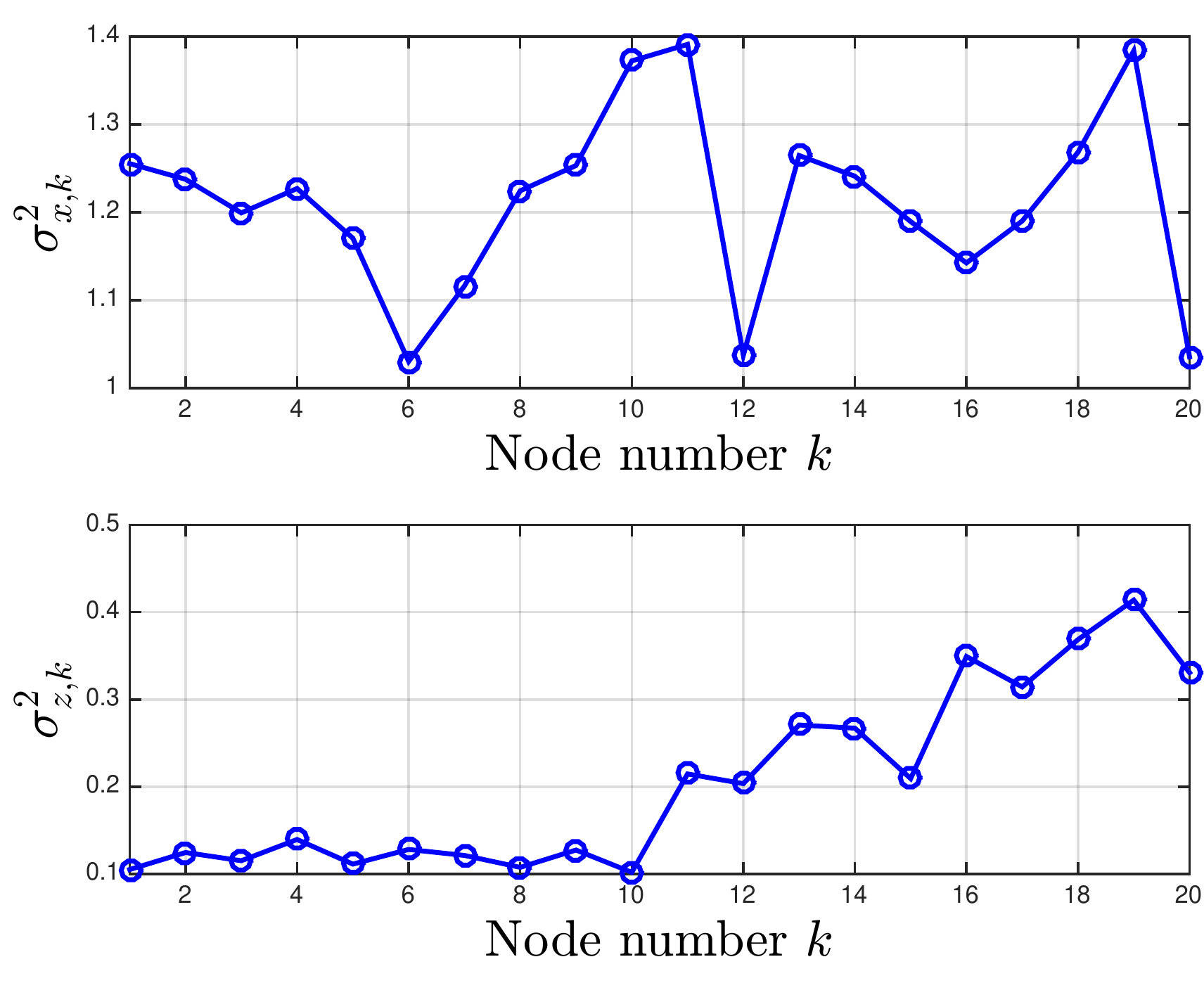}} 
	\caption{Experimental setup.}
	\label{fig: setup}
\end{figure}
Let $\card\{\cdot\}$ denote the cardinal of its entry. We run the diffusion algorithm~\eqref{eq: ATC FBS} by setting $c_{\ell k}=\frac{1}{\card\{\N_{\ell}\cap\C(\ell)\}}$ for $k\in\N_{\ell}\cap\C(\ell)$ and $a_{\ell k}=\frac{1}{\card\{\N_{k}\cap\C(k)\}}$ for $\ell\in\N_{k}\cap\C(k)$. The regularization weights are set to $\rho_{k\ell}=\frac{1}{\card\{\N_k\setminus\C(k)\}}$ for $\ell\in \N_k\setminus\C(k)$. We use a constant step-size $\mu=0.02$ for all nodes, a sparsity strength $\eta=0.06$ for the $\ell_1$-norm regularizer, and $\eta=0.04$ for the reweighted $\ell_1$-norm regularizer with $\epsilon=0.1$ . The results are averaged over $200$ Monte-Carlo runs.

The optimum vectors are set to $\bw^o_{\C_j}=\bw^o\,+\,\bdelta_{\C_j}$ at each cluster with $\bw^o$ an $18\times 1$ vector whose entries are generated from the Gaussian distribution $\N(0,1)$. First, we set $\bdelta_{\C_1}$ to $\boldsymbol{0}_{1\times18}^\top$, $\bdelta_{\C_2}$ to $[-1~\boldsymbol{0}_{1\times17}]^\top$, and $\bdelta_{\C_3}$ to $[\boldsymbol{0}_{1\times6}~-1~\boldsymbol{0}_{1\times11}]^\top$. Observe that at most two entries differ between clusters. After $500$ iterations, we set $\bdelta_{\C_2}$ to $[-\cb{1}_{1\times3}~1~\boldsymbol{0}_{1\times14}]^\top$ and $\bdelta_{\C_3}$ to $[\cb{0}_{1\times12}~-\cb{1}_{1\times3}~\boldsymbol{0}_{1\times3}]^\top$. In this way, at most $7$ entries differ between clusters. After $1000$ iterations, we set $\bdelta_{\C_2}$ to $[-\cb{1}_{1\times3}~\cb{1}_{1\times3}~-\cb{1}_{1\times3}~\boldsymbol{0}_{1\times9}]^\top$ and $\bdelta_{\C_3}$ to $[\boldsymbol{0}_{1\times9}~\cb{1}_{1\times3}~\cb{-1}_{1\times3}~\cb{1}_{1\times3}]^\top$. Thus, at most $18$ entries now differ between clusters.

In Fig.~\ref{fig: network MSD 6 algorithms}, we compare $6$ algorithms:  the non-cooperative LMS (algorithm~\eqref{eq: ATC FBS} with $\bA=\bC=\bI_N$ and $\eta=0$), the regularized LMS (algorithm~\eqref{eq: ATC FBS} with $\bA=\bC=\bI_N$) with $\ell_1$-norm and reweighted $\ell_1$-norm, the multitask diffusion LMS without regularization (algorithm~\eqref{eq: ATC FBS} with $\eta=0$), and the multitask diffusion LMS~\eqref{eq: ATC FBS} with $\ell_1$-norm and reweighted $\ell_1$-norm regularization.  As observed in this figure, when the tasks share a sufficient number of components, cooperation between clusters enhances the network MSD performance. When the number of common entries decreases, the cooperation between clusters becomes less effective. The use of the $\ell_1$-norm can lead to a degradation of the MSD relative to the absence of cooperation among clusters. However, the use of the reweighted $\ell_1$-norm allows to improve the performance.

In order to better understand the behavior of the algorithm~\eqref{eq: ATC FBS} in the clusters, we report in Fig.~\ref{fig: common and different components} the learning curves for $i\in[0,1000]$ of the common and distinct entries among clusters given by 
\begin{equation}
	\frac{1}{\card\{\C_j\}}\sum_{k\in\C_j}\expec\Big\{\sum_{m\in\Omega(i)}
	([\bw^o_k(i)-\bw_k(i)]_m)^2\Big\},
\end{equation} 
for $j=1$, $3$, where $\Omega(i)$ is the set of identical (or distinct) components among all clusters at iteration $i$ and $\bw^{o}_k(i)$ is the optimum parameter vector at node $k$ and iteration $i$. For example, for $i\in[0,500]$, the set of distinct components is $\{1,7\}$. As shown in this figure, cluster $\C_3$ benefits considerably from cooperation with other clusters in the estimation of the common entries. Nevertheless, cluster $\C_1$ benefits slightly from cooperation. This is due to the fact that the performance of $\C_3$ is low relatively to that of $\C_1$ since the SNR in $\C_3$ is small and the number of nodes employed in this cluster is $5$.
\begin{figure}
\centering
\includegraphics[trim = 0mm 0mm 0mm 0mm, clip, scale=0.38]{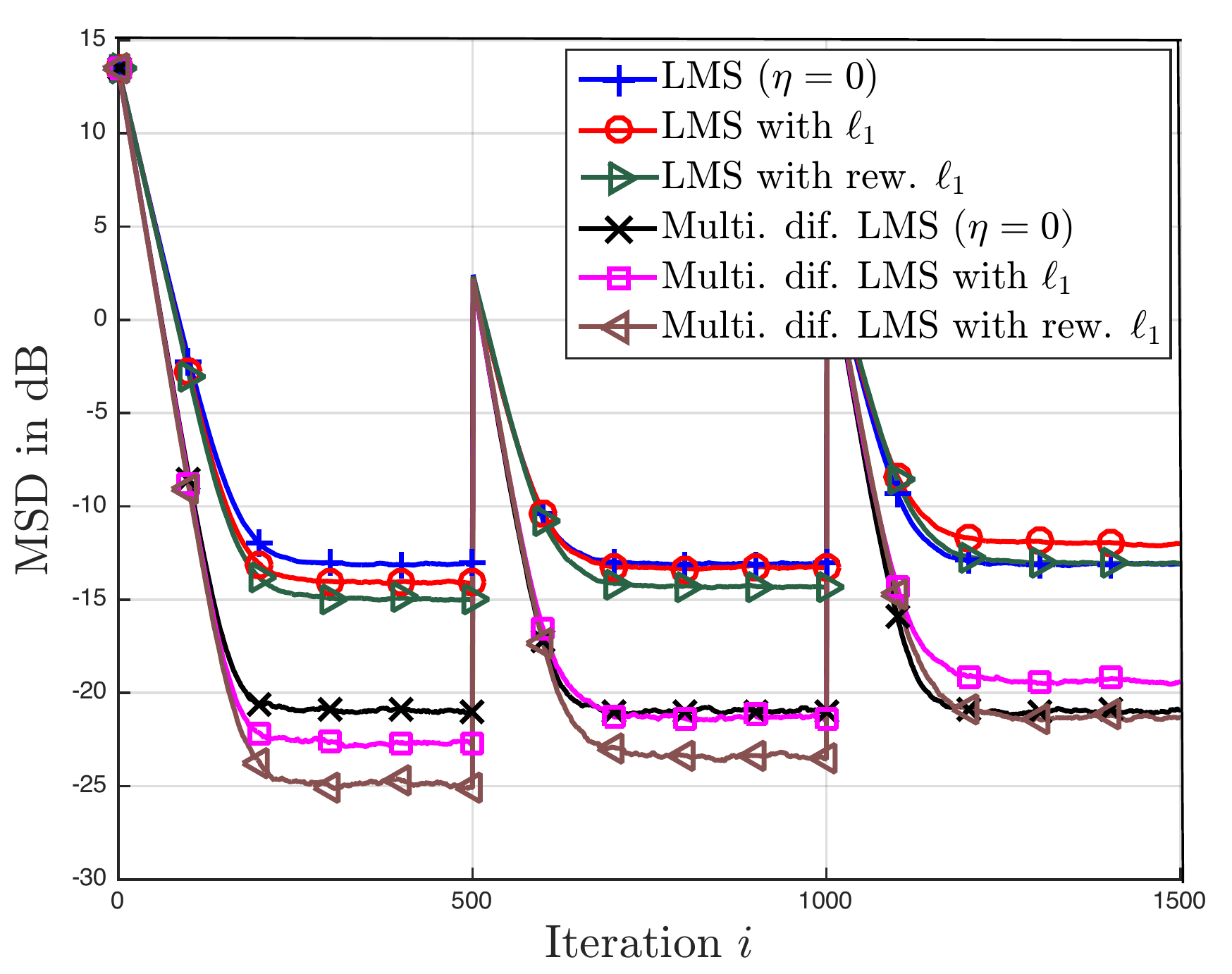}
\caption{Network MSD comparison for $6$ different strategies: non-cooperative LMS (algorithm~\eqref{eq: ATC FBS} with $\bA=\bC=\bI_N$, $\eta=0$), spatially regularized LMS (algorithm~\eqref{eq: ATC FBS} with $\bA=\bC=\bI_N$ with $\ell_1$-norm and reweighted $\ell_1$-norm, standard diffusion without cooperation between clusters (algorithm~\eqref{eq: ATC FBS} with $\eta=0$), and our proximal diffusion~\eqref{eq: ATC FBS} with $\ell_1$-norm and reweighted $\ell_1$-norm.}
\label{fig: network MSD 6 algorithms}
\end{figure}

\begin{figure*}[t]
\centering
\subfigure[Cluster $1$ MSD over identical entries.]{\includegraphics[trim = 15mm 70mm 15mm 80mm, clip, scale=0.33]{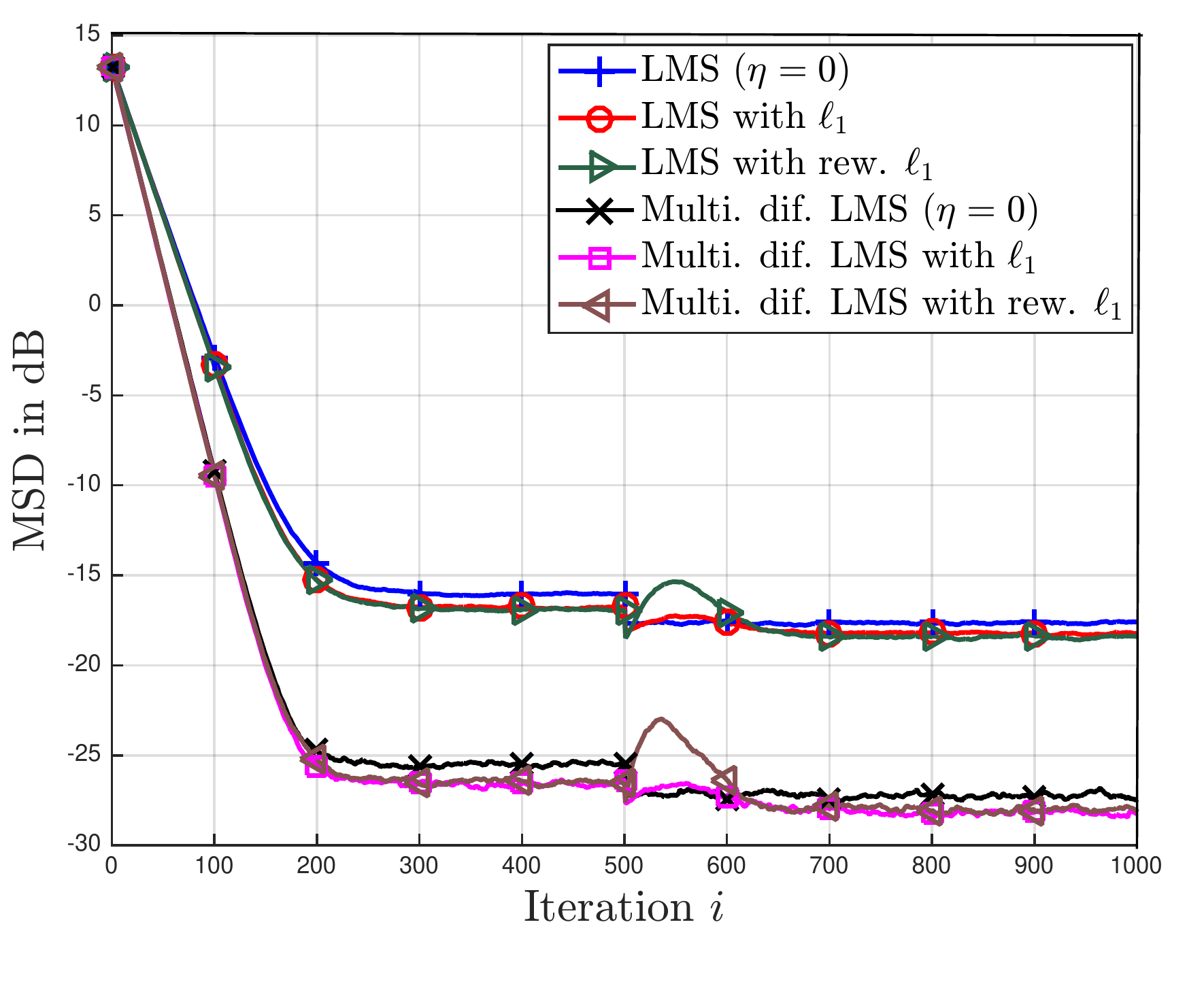}} \qquad
\subfigure[Cluster $3$ MSD over identical entries.]{ \includegraphics[trim = 15mm 70mm 15mm 80mm, clip, scale=0.33]{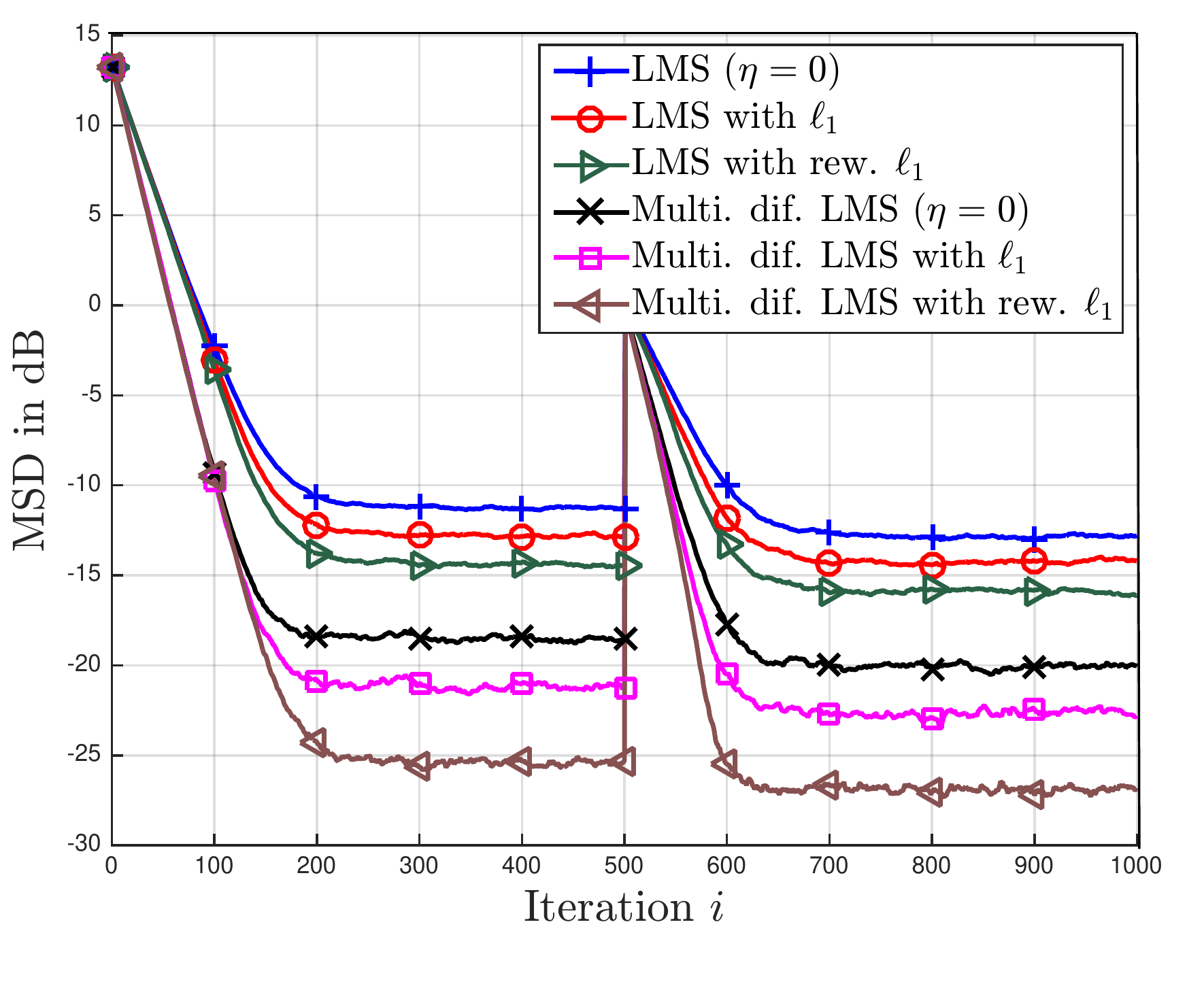}} \\
\subfigure[Cluster $1$ MSD over distinct entries.]{ \includegraphics[trim = 15mm 70mm 15mm 80mm, clip, scale=0.33]{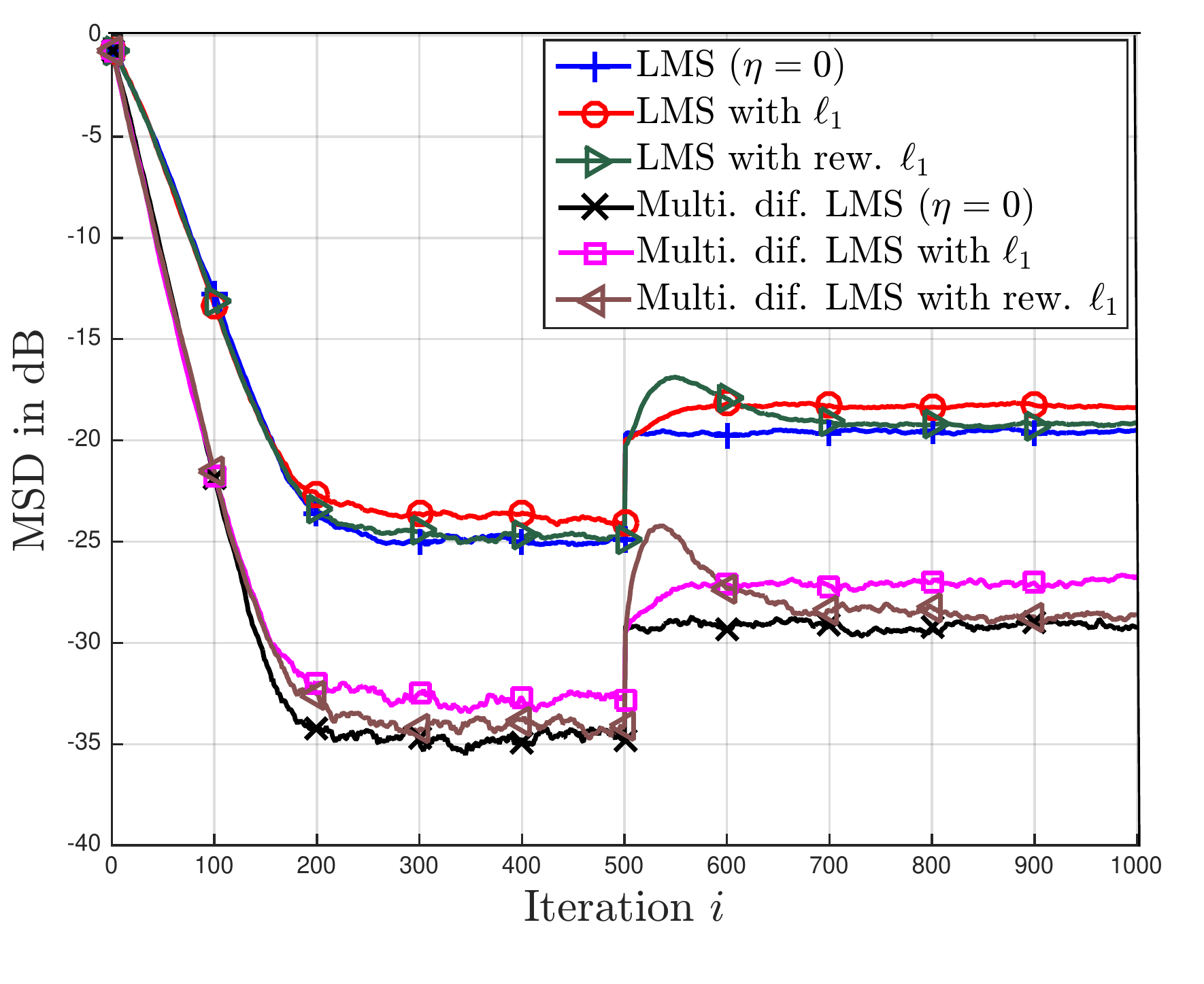}} \qquad
\subfigure[Cluster $3$ MSD over distinct entries.]{ \includegraphics[trim = 15mm 70mm 15mm 80mm, clip, scale=0.33]{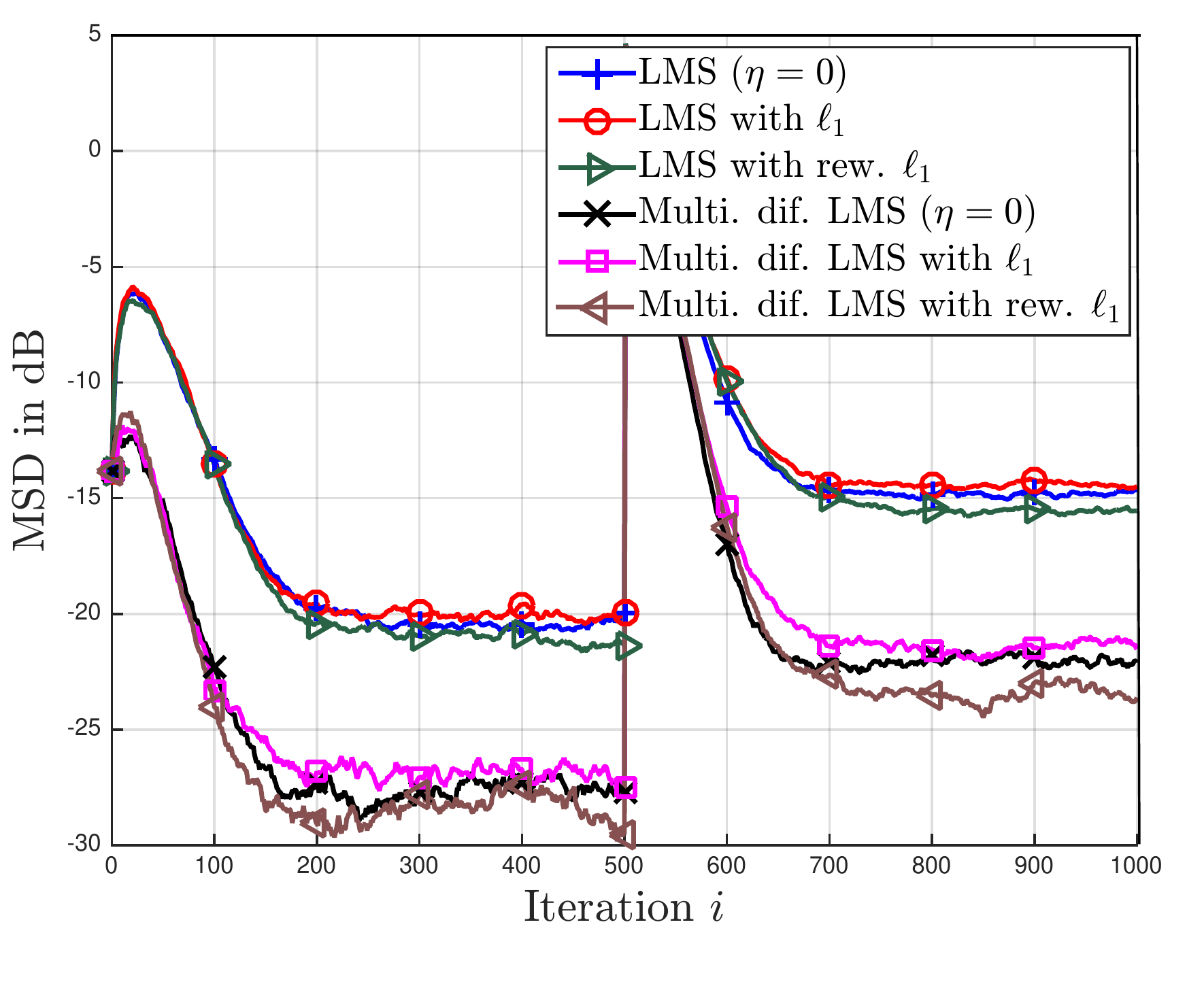}} 
\caption{Clusters MSD over identical and distinct components. Comparison for the same $6$ different strategies considered in Fig.~\ref{fig: network MSD 6 algorithms}.}
\label{fig: common and different components}
\end{figure*}

We shall now illustrate the effect of the regularization strength $\eta$ over the performance of the algorithm for different numbers of common entries between the optimum vectors $\bw^o_k$. We consider the same settings as above, which means that the number of common entries among clusters is successively set to $16$, $11$, and $0$ over $18$. Parameter $\eta$ is uniformly sampled over $[0,0.14]$. Figure~\ref{fig: effect of regularization} shows the gain in steady-state MSD versus the unregularized algorithm obtained for $\eta=0$, as a function of $\eta$. For each $\eta$, the results are averaged over $50$ Monte-Carlo runs and over $50$ samples after convergence of the algorithm. It can be observed in Fig.~\ref{fig: effect of regularization} that the interval for $\eta$ over which the network benefits from cooperation between clusters becomes smaller as the number of common entries decreases. In addition, the reweighted $\ell_1$-norm regularizer provides better performance than the $\ell_1$-norm regularizer.

\begin{figure*}[t]
	\centering
	\subfigure[ $\ell_1$-norm.]{\includegraphics[trim = 0mm 0mm 0mm 0mm, clip, scale=0.35]
	{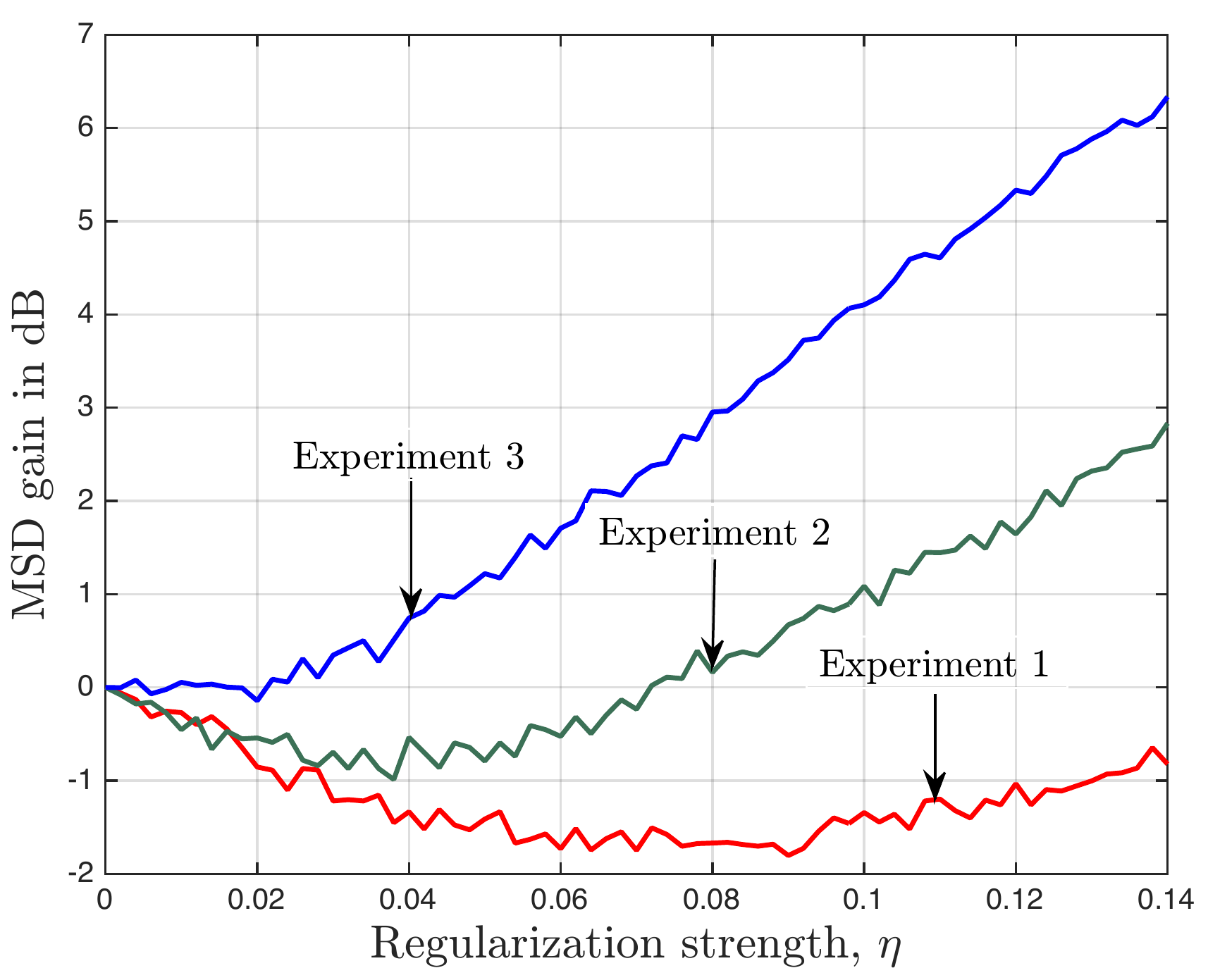}}
	\subfigure[Reweighted $\ell_1$-norm.]{ \includegraphics[trim = 0mm 0mm 0mm 0mm, clip, scale=0.35]
	{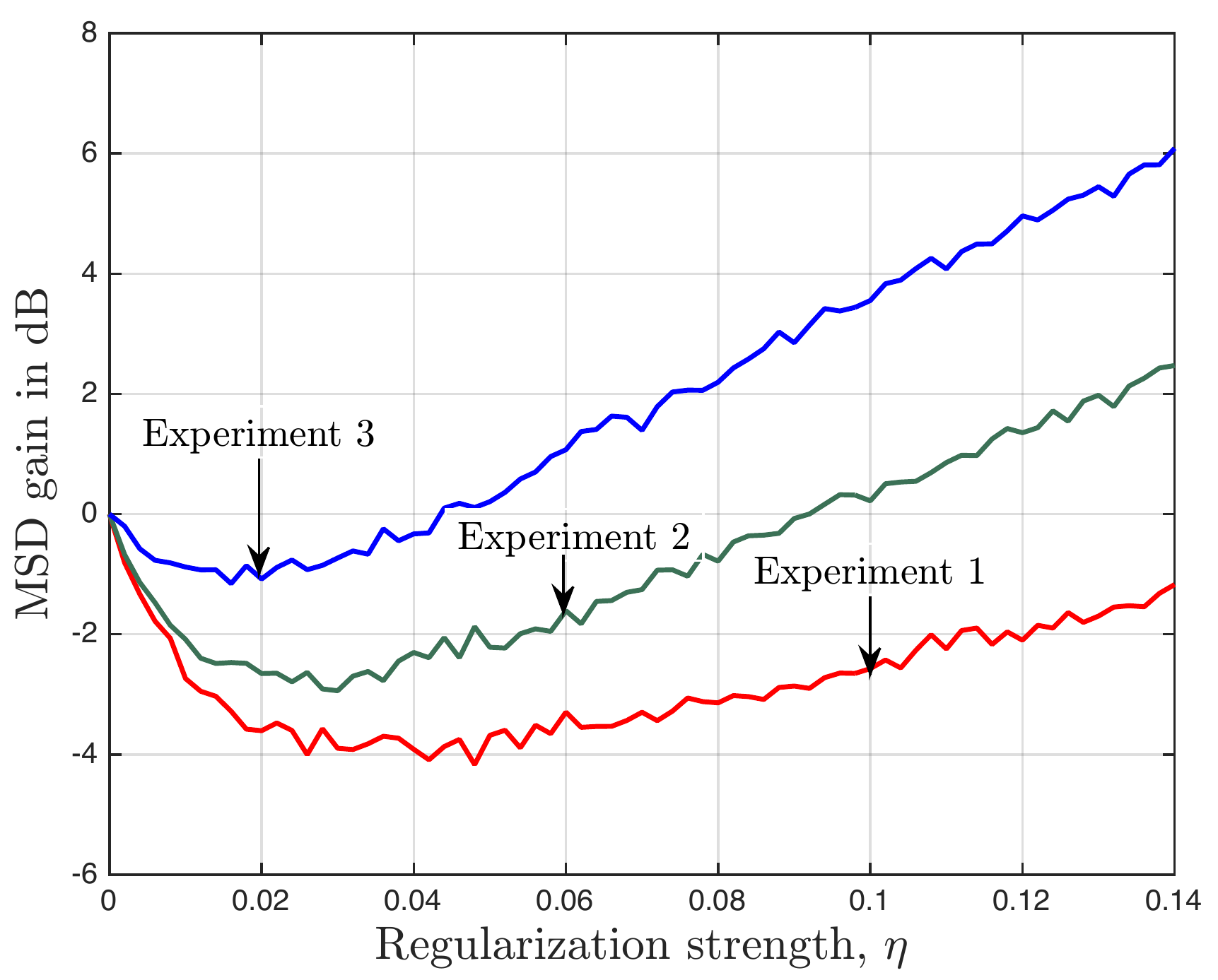}} 
	\caption{Differential network MSD ($\text{MSD}(\eta)-\text{MSD}(\eta=0)$) in dB with respect to the regularization strength $\eta$ for the multitask diffusion
	LMS~\eqref{eq: ATC FBS}  with $\ell_1$-norm (left) and reweighted $\ell_1$-norm (right) for $3$ different degrees of similarity between tasks. Experiment 1: at most 2 entries differ between clusters. Experiment 2: at most 7 entries differ between clusters. Experiment 3: at most 18 entries differ between clusters.}
	\label{fig: effect of regularization}
\end{figure*}

In order to guarantee a correct cooperation among clusters, we repeat the same experiment as Fig.~\ref{fig: network MSD 6 algorithms} using the adaptive rule in~\eqref{eq: adaptive regularization factors} for adjusting the regularization factors $p_{k\ell}$. The proportionality coefficient in~\eqref{eq: adaptive regularization factors} is set equal to one. As shown in Fig.~\ref{fig: adaptive regularization factors}, when the number of distinct components is small, both $\ell_1$ and reweighted $\ell_1$-norms yield better performance than the diffusion LMS with $\eta=0$. When the number of distinct components increases ($i\in(1000,1500]$), the performance of strategy~\eqref{eq: ATC FBS} with $\ell_1$-norm gets closer to diffusion LMS with $\eta=0$, while the reweighted $\ell_1$-norm still guarantees a gain. Thus, the mechanism proposed in~\eqref{eq: adaptive regularization factors} for the selection of the regularization factors improves the cooperation between nodes belonging to distinct clusters.
\begin{figure}
\centering
\includegraphics[trim = 0mm 0mm 0mm 0mm, clip, scale=0.35]{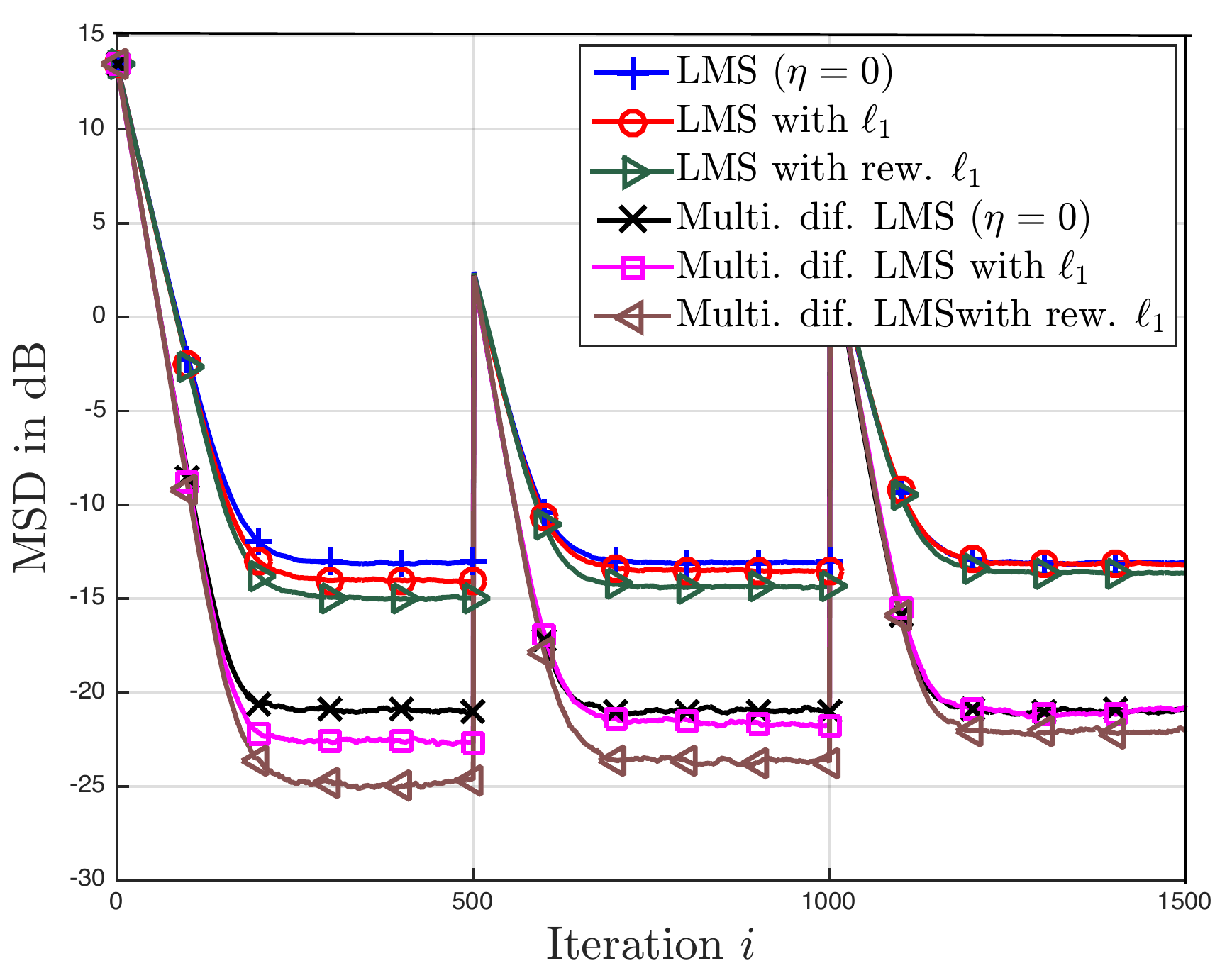}
\caption{Network MSD comparison for the same $6$ different strategies considered in Fig.~\ref{fig: network MSD 6 algorithms} using adaptive regularization factors $p_{k\ell}(i)$.}
\label{fig: adaptive regularization factors}
\end{figure}

Finally, we compare the current multitask diffusion strategy~\eqref{eq: ATC FBS} with two other useful strategies existing in the literature~\cite{plata2014distributed,chen2014multitask}. We consider a stationary environment where the optimum parameter vectors $\{\bw_{\C_j}^o\}_{j=1}^3$ consist of a sub-vector $\boldsymbol{\xi}^o$ of $16$ parameters of global interest to the whole network and a $2\times 1$ sub-vector $\{\boldsymbol{\varsigma}_{\C_j}^o\}$ of common interest to nodes belonging to cluster $\C_j$, namely, $\bw_{\C_j}^o=\col\{\boldsymbol{\xi}^o,\boldsymbol{\varsigma}_{\C_j}^o\}$. The entries of $\boldsymbol{\xi}^o$, $\boldsymbol{\varsigma}_{\C_1}^o$, $\boldsymbol{\varsigma}_{\C_2}^o$, and $\boldsymbol{\varsigma}_{\C_3}^o$ are uniformly sampled from a uniform distribution $\mathcal{U}(-3,3)$. Except for these changes, we consider the same experimental setup described in the first paragraph of the current section. When applying the strategy developed in~\cite{plata2014distributed}, we assume that node $k$ belonging to cluster $\C_j$ is aware that the first $16$ parameters of $\bw_{\C_j}^o$ are of global interest to the whole network while the remaining parameters are of common interest to nodes in cluster $\C_j$. However, the current method~\eqref{eq: ATC FBS} and the algorithm in~\cite{chen2014multitask} do not require such assumption. We run the ATC D-NSPE strategy developed in~\cite{plata2014distributed} using uniform combination weights $a_{\ell k}^w=1/\card\{\N_k\}$ for $\ell\in\N_{k}$ and $a_{\ell k}^{\varsigma_{\C(k)}}=1/\card\{\N_k\cap\C(k)\}$ for $\ell\in\N_{k}\cap\C(k)$, and uniform step-sizes $\mu_k=0.02$ $\forall k$. We run the multitask diffusion strategy developed in~\cite{chen2014multitask} by setting $\{c_{\ell k},a_{\ell k},\rho_{k\ell}\}$ in the same manner described in the first paragraph of the current section, $\mu_k=0.02$ $\forall k$, and $\eta=0.06$. The learning curves of the algorithms are reported in Fig.~\ref{fig: comparison with competitive strategies}. As expected, it can be observed that the cooperation between clusters based on the $\ell_2$-norm~\cite{chen2014multitask} degrades the performance relative to the case of non-cooperative clusters, i.e., $\eta=0$. Indeed, the multitask diffusion strategy developed in~\cite{chen2014multitask} considers squared $\ell_2$-norm co-regularizers to promote the smoothness of the graph signal, whereas, in the current simulation we need to promote the sparsity of the vector $\bw_k^o-\bw_\ell^o$. Furthermore, when the reweighted $\ell_1$-norm is used, our method is able to perform well compared to the strategy developed in~\cite{plata2014distributed} that requires the knowledge of the indices of common and distinct entries in the parameter vectors. We note that recent unsupervised strategies~\cite{chen2016group,plata2015unsupervised} dealing with group of variables rather than variables propose to add a step in order to adapt the cooperation between neighboring nodes based on the group at hand. It is shown in~\cite{chen2016group} that the performance depends heavily on the group decomposition of the parameter vectors.
\begin{figure}
\centering
\includegraphics[trim = 0mm 0mm 0mm 0mm, clip, scale=0.35]{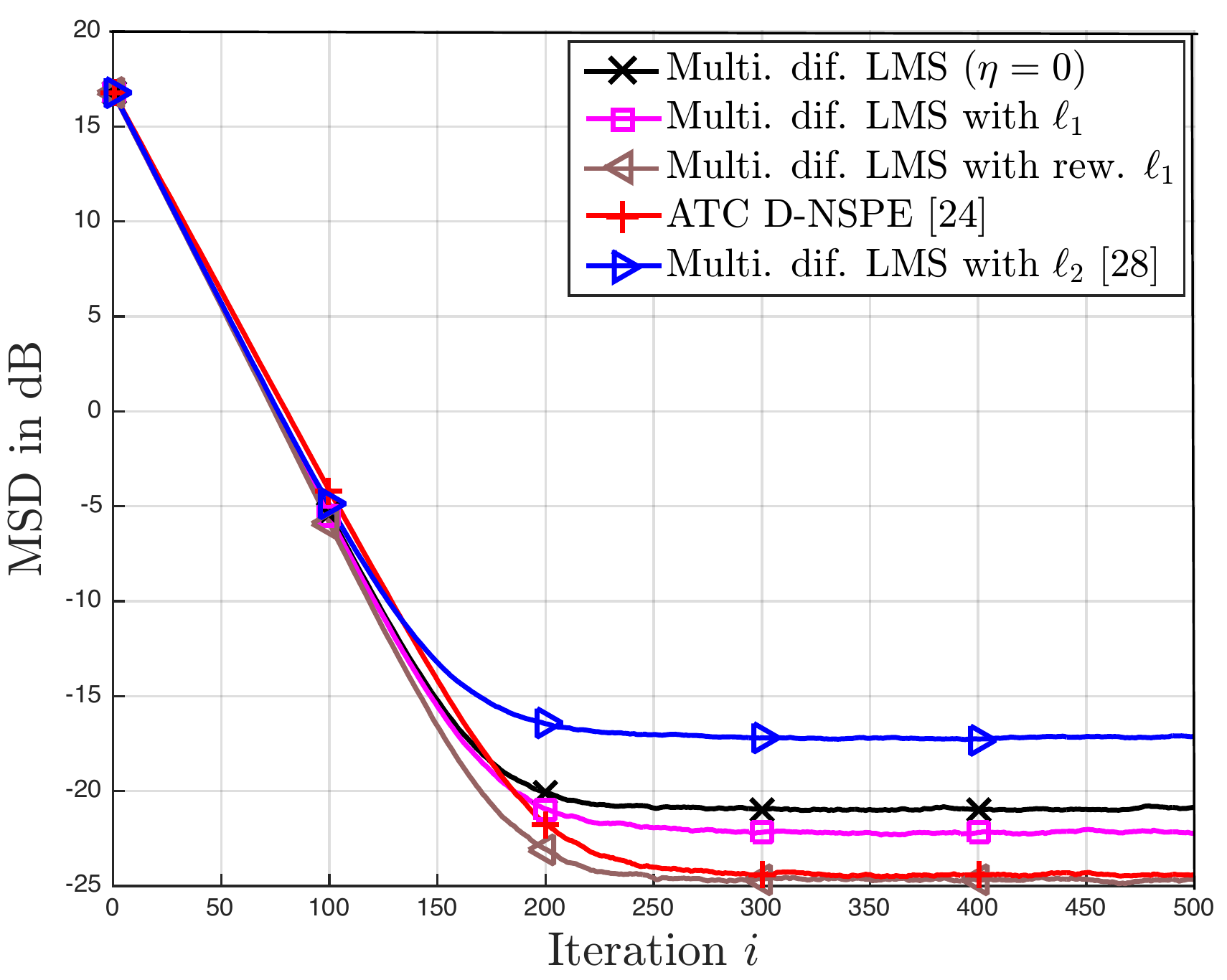}
\caption{Network MSD comparison for $5$ different strategies: standard diffusion without cooperation between clusters (algorithm~\eqref{eq: ATC FBS} with $\eta=0$), our proximal diffusion~\eqref{eq: ATC FBS} with $\ell_1$-norm and reweighted $\ell_1$-norm, the ATC D-NSPE algorithm developed in~\cite{plata2014distributed}, and the multitask diffusion strategy with squared $\ell_2$-norm coregularizers~\cite{chen2014multitask}.}
\label{fig: comparison with competitive strategies}
\end{figure}


\subsection{Distributed spectrum sensing}

Consider a cognitive radio network composed of $N_P$ primary users (PU) and $N_S$ secondary users (SU). To avoid causing harmful interference to the primary users, each secondary user has to detect the frequency bands used by all primary users, even under low signal to noise ratio conditions \cite{lorenzo2013cogradio,sayed2014diffusion,plata2014distributed}. We assume that the secondary users are grouped into $Q$ clusters and that there exists within each cluster a low power interference source (IS). The goal of each secondary user is to estimate the aggregated spectrum transmitted by all active primary users, as well as the spectrum of the interference source present in its cluster.  

In order to facilitate the estimation task of the secondary users, we assume that the power spectrum of the signal transmitted by the primary user $p$ and the interference source $q$ can be represented by a linear combination of $N_B$ basis functions $\phi_m(f)$:
\begin{eqnarray}
	S_p(f)	&=&	\sum_{m=1}^{N_B}\alpha_{pm}\phi_m(f),\quad p=1,\ldots, N_P,\\
	S_q(f)	&=&	\sum_{m=1}^{N_B}\beta_{qm}\phi_m(f),\quad q=1,\ldots, Q,
\end{eqnarray}
where $\alpha_{pm}$, $\beta_{qm}$ are the combination weights, and $f$ is the normalized frequency. Each secondary user $k\in\C_q$ has to estimate the $N_B\times(N_P+1)$ vector $\bUps^o_k=\col\{\balpha^o_1,\ldots,\balpha^o_{N_P},\bbeta_q^o\}$ where $\balpha^o_p=[\alpha_{p1},\ldots,\alpha_{pN_B}]^\top$ and $\bbeta_q^o=[\beta_{q1},\ldots,\beta_{qN_B}]^\top$. Let $\ell_{p,k}(i)$ denote the path loss factor between the primary user $p$ and the secondary user $k$ at time $i$. Let also $\ell'_{q,k}(i)$ denote the path loss factor between the interference source $q$ and the secondary user $k$ at time $i$. Then, the power spectrum sensed by node $k\in\C_q$ at time $i$ and frequency $f_j$ can be expressed as follows:
\begin{equation}
\label{eq: relation 1}
r_{k,j}(i)=\sum_{p=1}^{N_P}\ell_{p,k}(i)S_p(f_j)+\ell'_{q,k}(i)S_q(f_j)+z_{k,j}(i),
\end{equation}
where $z_{k,j}(i)$ is the sampling noise at the $j$-th frequency assumed to be zero-mean Gaussian with variance $\sigma^2_{z_{k,j}}$. At each time instant $i$, node $k$ observes the power spectrum over $N_F$ frequency samples. Let $\br_k(i)$ and $\bz_k(i)$ be the $N_F\times 1$ vectors whose $j$-th entries are $r_{k,j}(i)$ and $z_{k,j}(i)$, respectively. Using~\eqref{eq: relation 1}, we can establish the following linear data model for node $k\in\C_q$:
\begin{equation}
\br_k(i)=\bPhi_k(i)\bUps^o_k+\bz_k(i),
\end{equation}
where $\bPhi_k(i)\triangleq[\ell_{1,k}(i),\ldots,\ell_{N_P,k}(i),\ell'_{q,k}(i)]\otimes\bPhi$ with $\bPhi$ the $N_F\times N_B$ matrix whose $j$-th row contains the magnitudes of the $N_B$ basis functions at the frequency sample $f_j$. 

To show the effect of multitask learning with sparsity-based regularization, we consider a cognitive radio network consisting of $N_P=2$ primary users and $N_S=13$ secondary users decomposed into $4$ clusters as shown in Fig.~\ref{fig: cognitive topology}. The power spectrum is represented by a combination of $N_B=20$ Gaussian basis functions centered at the normalized frequency $f_m$ with variance $\sigma^2_m=0.001$ for all $m$:
\begin{equation}
\phi_m(f)=\exp^{-\frac{(f-f_m)^2}{2\sigma^2_m}},
\end{equation} 
where the central frequencies $f_m$ are uniformly distributed. The combination vectors are set to:
\begin{equation}
	\begin{split}
		\bUps_{\C_1}^o		&=[\cb{0}_{1 \times 4} ~ 1 ~ 1 ~ \cb{0}_{1 \times 14} ,\cb{0}_{1 \times 14} ~ 1~ 1 ~ \cb{0}_{1 \times 4},0~ 0.3 ~0.3~\cb{0}_{1 \times 17}]^\top	\\
		\bUps_{\C_2}^o		&=[\cb{0}_{1 \times 4} ~ 1 ~ 1 ~ \cb{0}_{1 \times 14} ,\cb{0}_{1 \times 14} ~ 1~ 1 ~ \cb{0}_{1 \times 4},\cb{0}_{1 \times 20}]^\top	\\
		\bUps_{\C_3}^o		&=[\cb{0}_{1 \times 4} ~ 1 ~ 1 ~ \cb{0}_{1 \times 14} ,\cb{0}_{1 \times 14} ~ 1~ 1 ~ \cb{0}_{1 \times 4},0~ 0.3 ~\cb{0}_{1 \times 16} ~ 0.3 ~ 0]^\top	\\
		\bUps_{\C_4}^o		&=[\cb{0}_{1 \times 4} ~ 1 ~ 1 ~ \cb{0}_{1 \times 14} ,\cb{0}_{1 \times 14} ~ 1~ 1 ~ \cb{0}_{1 \times 4},\cb{0}_{1 \times 17} ~ 0.3 ~ 0.3 ~ 0]^\top.			
	\end{split}
\end{equation}
\begin{figure}
\centering
\includegraphics[trim = 0mm 0mm 0mm 0mm, clip, scale=0.33]{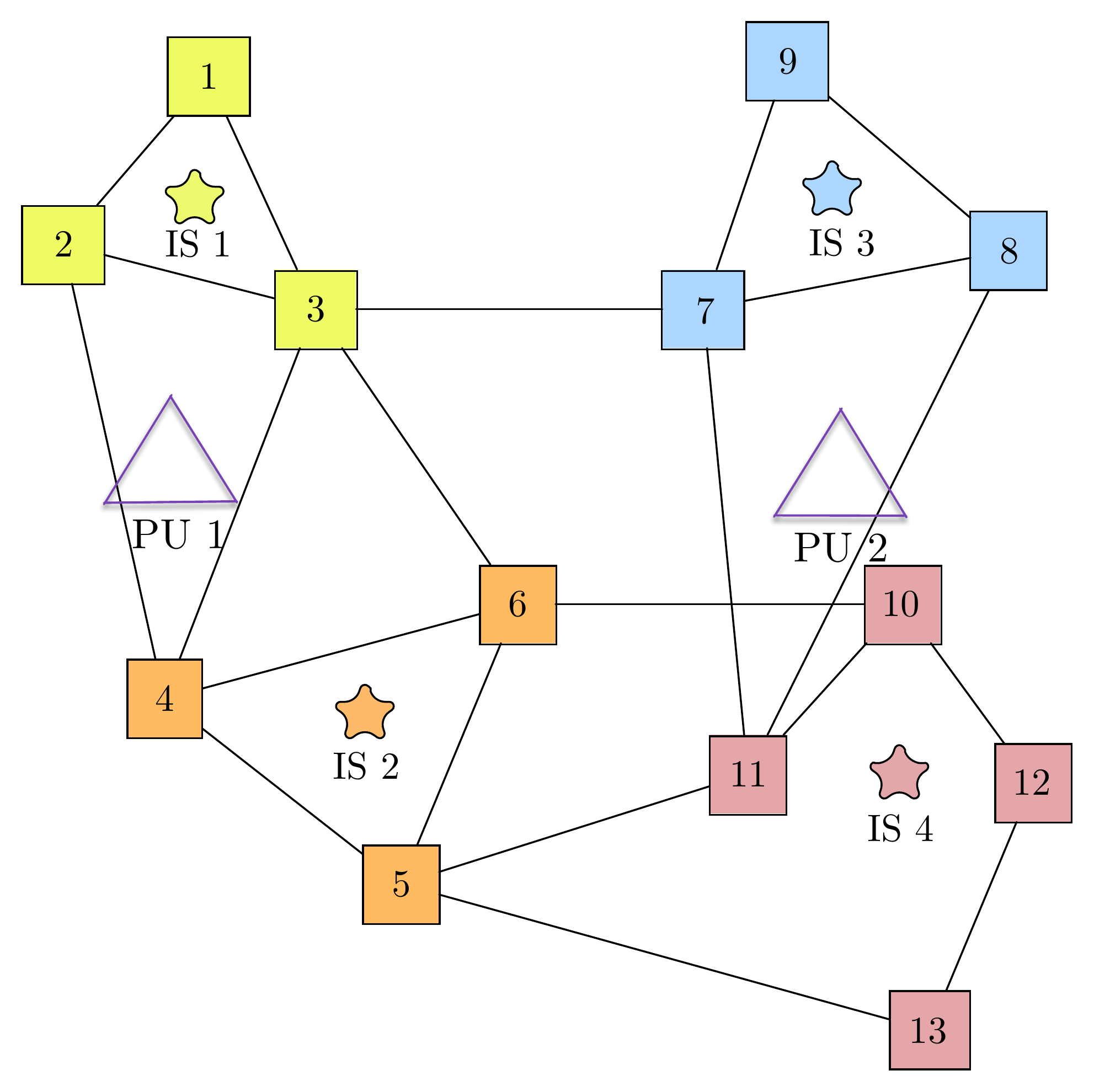}
\caption{A cognitive radio network consisting of $2$ primary users and $13$ secondary users grouped into $4$ clusters containing each an interference source IS.}
\label{fig: cognitive topology}
\end{figure}
We consider $N_F=80$ frequency samples. Based on the free propagation theory, we set the deterministic path loss factor $\overline{\ell}_{p,k}$ to the inverse of the squared distance between the transmitter $p$ and the receiver $k$. At time instant $i$, we set $\ell_{p,k}(i)= \overline{\ell}_{p,k}+\delta\ell_{p,k}(i)$ with $\delta\ell_{p,k}(i)$ a zero-mean random Gaussian variable with standard deviation $0.1\overline{\ell}_{p,k}$. The secondary user $k$ estimates $\ell_{p,k}(i)$ according to the following model:
\begin{equation}
	\hat{\ell}_{p,k}(i)=\left\lbrace
	\begin{array}{ll}
		\overline{\ell}_{p,k},	\quad&\text{if}\quad \ell_{p,k}(i)>\ell_0,\\
		0,				\quad&\text{otherwise}
	\end{array}
	\right.
\end{equation}
with $\ell_0$ a threshold value. The same rule is used to set the path loss factor between the interference sources and the secondary users. We run the ATC diffusion algorithm~\eqref{eq: ATC FBS} with the following adaptation step:
\begin{equation}
\label{eq: adaptation step cogradio}
\bpsi_k(i+1)=\bUps_k(i)+\mu_k\hat{\bPhi}_{k}^\top(i)[\br_k(i)-\hat{\bPhi}_{k}(i)\bUps_k(i)],\\
\end{equation}
with $\bUps_k(i)$ the estimate of $\bUps_k^o$ at time instant $i$. The sampling noise $z_{k\ell,j}(i)$ is assumed to be a zero-mean random Gaussian variable with standard deviation $0.01$. The combination coefficients $\{a_{\ell k}\}$ and regularization factors $\{\rho_{k\ell}\}$ are set in the same way as in the previous experimentation. 

The MSD learning curves are averaged over $50$ Monte-Carlo runs. We run the multitask diffusion LMS~\eqref{eq: ATC FBS} in two different situations. In the first scenario, we do not allow any cooperation between clusters by setting $\eta=0$. In the second scenario, we set the regularization strength $\eta$ to $0.01$ and we use the $\ell_1$-norm as co-regularizing function. As can be seen in Fig.~\ref{fig: cognitive MSD}, the network MSD performance is significantly improved by cooperation among clusters. For comparison purposes, we also run the ATC D-NSPE strategy developed in~\cite{plata2014distributed} and the multitask diffusion strategy with $\ell_2$-norm developed in~\cite{chen2014multitask}. For the ATC D-NSPE strategy we assume that nodes are aware that the first $N_P\times N_B$ components of the vector $\bUps^o_k$ are of global interest to the whole network and that the remaining components are of common interest to the cluster $\C(k)$. The link weights $\{a_{\ell k}, c_{\ell k}, \rho_{k\ell}, a_{\ell k}^w, a_{\ell k}^{\varsigma_{\C(k)}}\}$ are set in the same manner as the experiment in Fig.~\ref{fig: comparison with competitive strategies}. It can be observed from Fig.~\ref{fig: cognitive MSD} that our strategy performs well without the need to know the parameters of global interest and the parameters of common interest during the learning process. Figure~\ref{fig: cognitive PSD} shows the estimated power spectrum density for nodes $2$, $4$, $7$, and $13$ when running the multitask diffusion strategy~\eqref{eq: ATC FBS} with $\eta=0$ (left) and $\eta=0.01$ (right). In the left plot, we observe that the clusters are able to estimate their interference source. However, depending on the distance to the primary users, the secondary users do not always succeed in estimating the power spectrum transmitted by all active primary users. For example, clusters $1$ and $2$ are not able to estimate the power spectrum transmitted by PU$2$. As shown in the right plot, regardless of the distance between primary and secondary users, each secondary user is able to estimate the aggregated power spectrum transmitted by all the primary users and its own interference source by cooperating with nodes belonging to neighboring clusters.
\begin{figure}
\centering
\includegraphics[trim = 0mm 0mm 0mm 0mm, clip, scale=0.36]{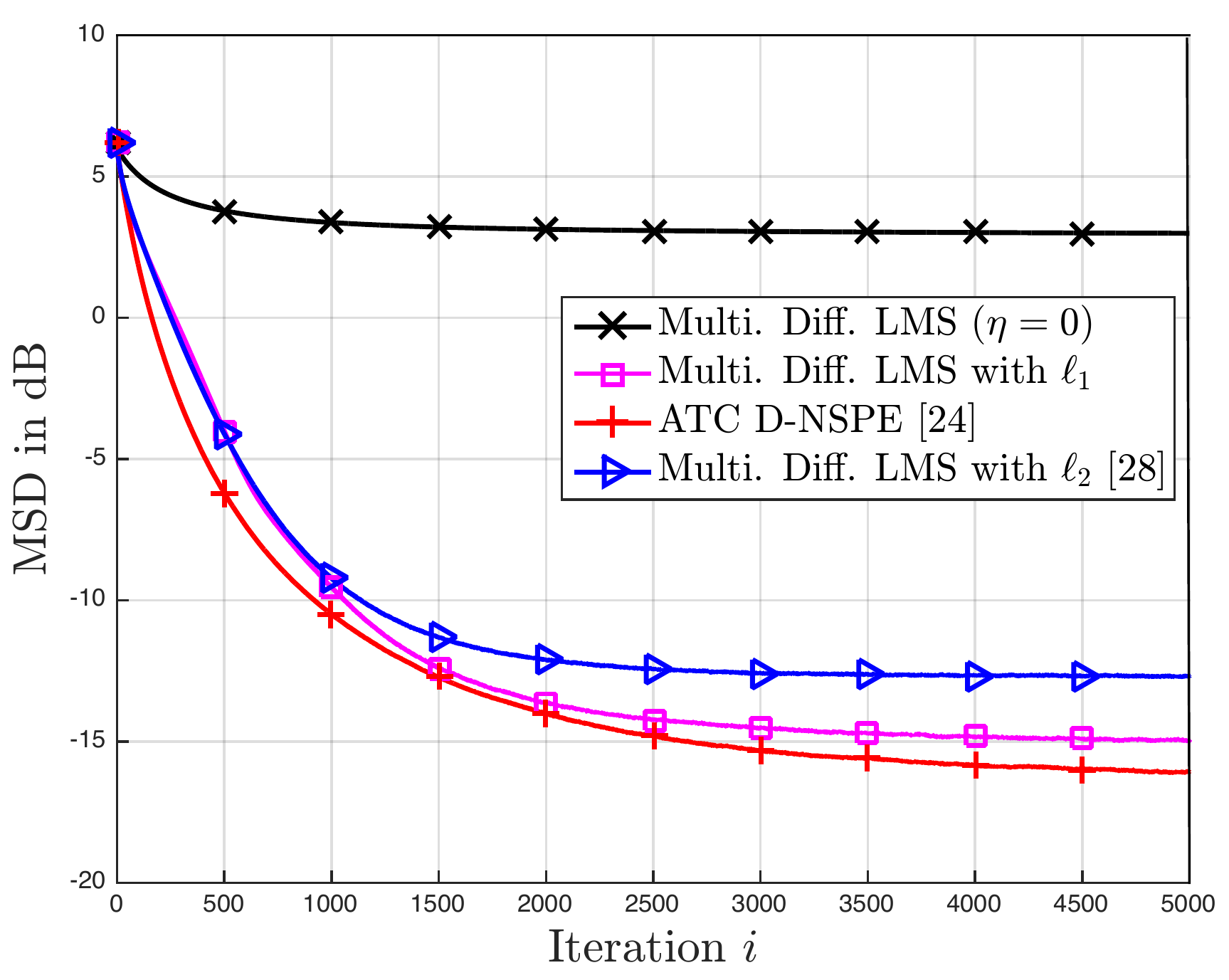}
\caption{Network MSD comparison for $4$ different algorithms: standard diffusion LMS without cooperation between clusters ($\eta=0$), our proximal diffusion~\eqref{eq: ATC FBS} with $\ell_1$-norm regularizer, the ATC D-NSPE algorithm developed in~\cite{plata2014distributed}, and the multitask diffusion strategy~\cite{chen2014multitask}.}
\label{fig: cognitive MSD}
\end{figure}
\begin{figure*}
\centering
\includegraphics[trim = 10mm 70mm 20mm 70mm, clip, scale=0.36]{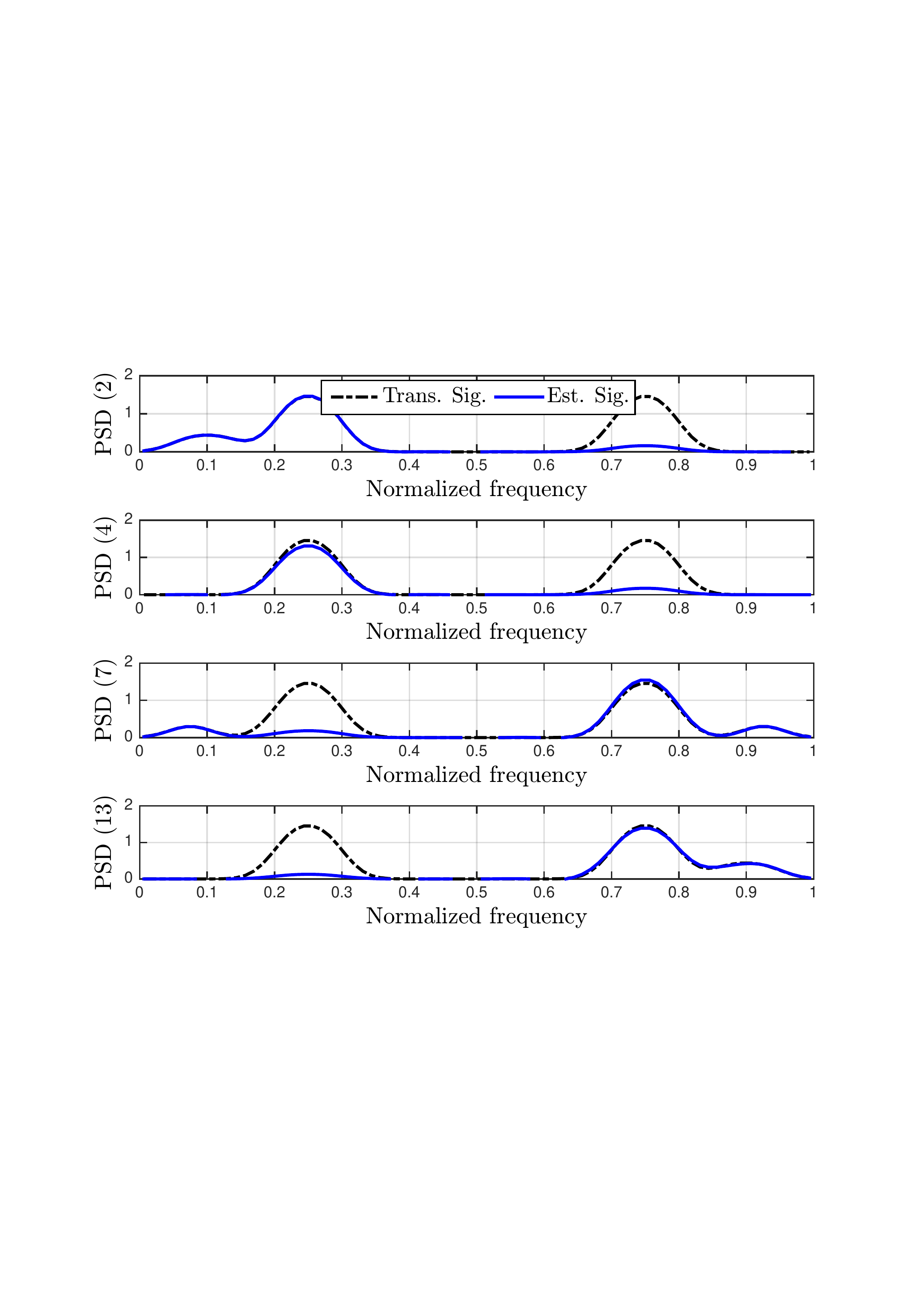}
\includegraphics[trim = 10mm 70mm 20mm 70mm, clip, scale=0.36]{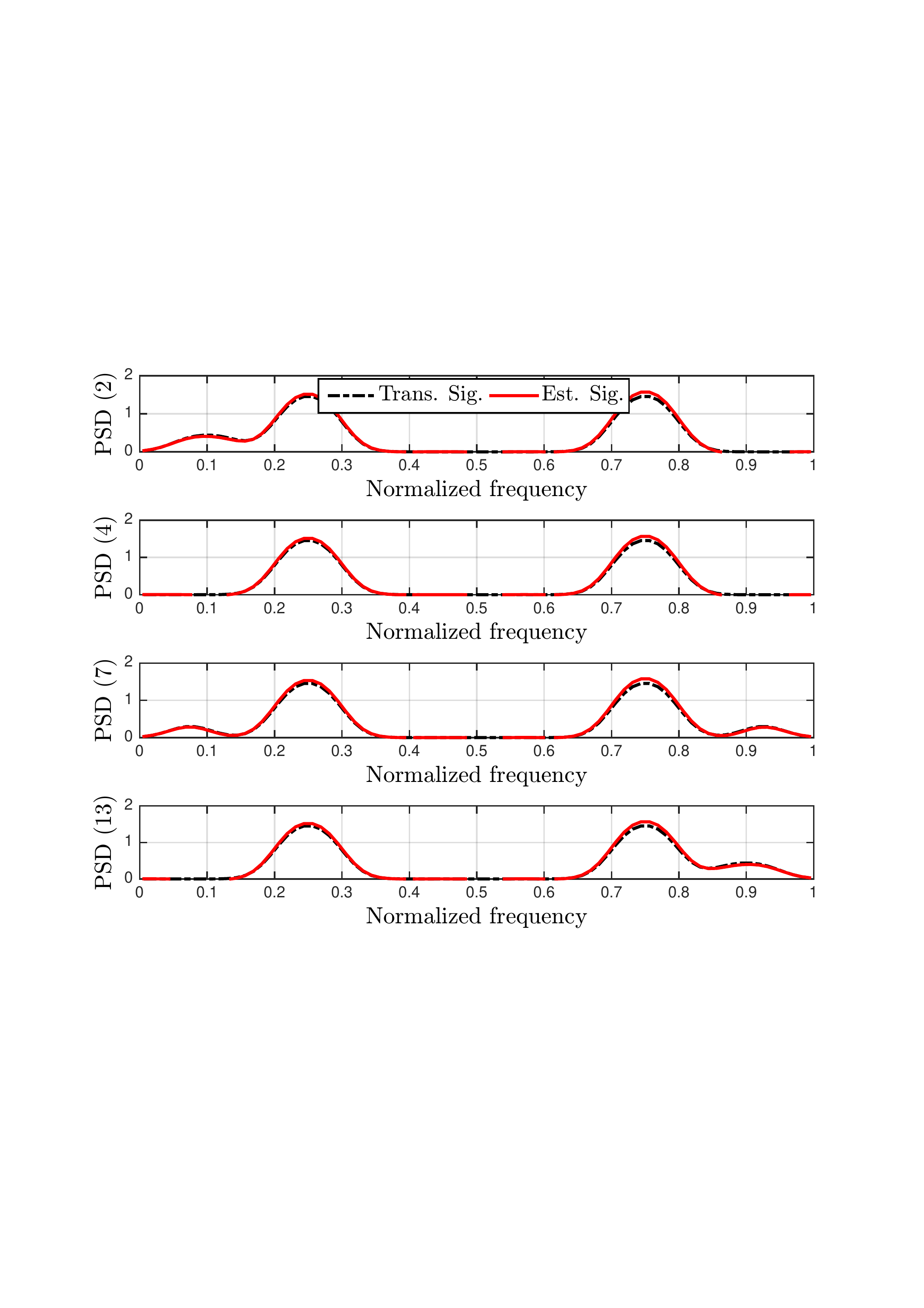}
\caption{PSD estimation for nodes 2 ($\C_1$), 4 ($\C_2$), 7 ($\C_3$), and 13 ($\C_4$). Left: noncooperating clusters (multitask strategy~\eqref{eq: ATC FBS} with $\eta=0$). Right: cooperating clusters (multitask strategy~\eqref{eq: ATC FBS} with $\eta\neq0$).}
\label{fig: cognitive PSD}
\end{figure*}


\section{Conclusion and perspectives}
\label{sec: conclusion}

In this work, we considered multitask learning problems over networks where the optimum parameter vectors to be estimated by neighboring clusters have a large number of similar entries and a relatively small number of distinct entries. It then becomes advantageous to develop distributed strategies that involve cooperation among adjacent clusters in order to exploit these similarities. A diffusion forward-backward splitting algorithm with $\ell_1$-norm and reweighed $\ell_1$-norm co-regularizers was derived to address this problem. A closed-form expression for the proximal operator was derived to achieve higher efficiency. Conditions on the step-sizes to ensure convergence of the algorithm in the mean and mean-square sense were derived. Finally, simulation results were presented to illustrate the benefit of cooperating to promote similarities between estimates. Future research efforts will be focused on exploiting other sparsity promoting co-regularizers. Perspectives also include the derivation of other forms of cooperation depending on prior information.


\bibliographystyle{IEEEbib}
\bibliography{Proxref}

\begin{IEEEbiography}[{\includegraphics[width=1in,height=1.25in,clip,keepaspectratio]{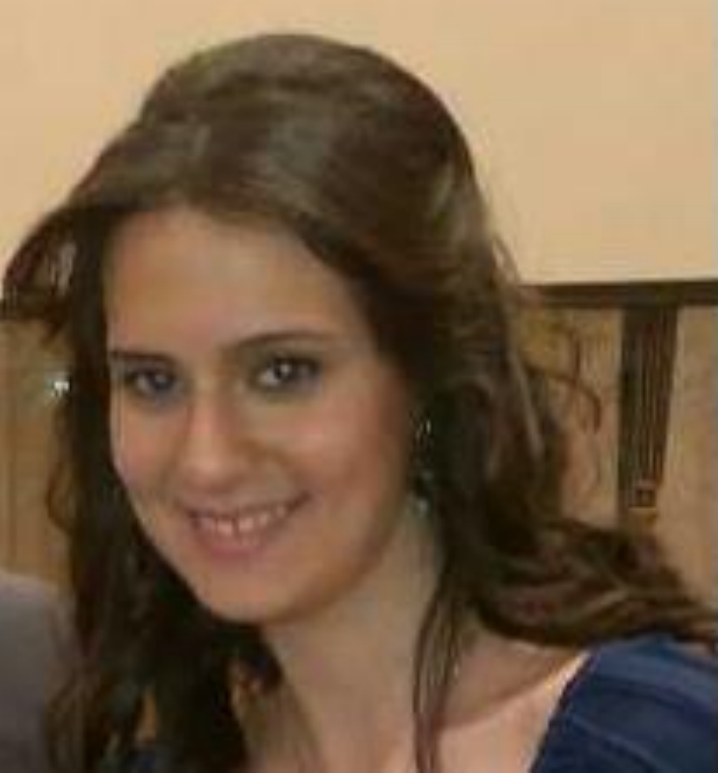}}
]{Roula Nassif}

was born in Beirut, Lebanon. She received the bachelor's degree in Electrical Engineering from the Lebanese University, Lebanon, in 2013. She received the M.S. degrees in Industrial Control and Intelligent Systems for Transport from the Lebanese University, Lebanon, and from Compi\`egne University of Technology, France, in 2013. Since October 2013 she is a Ph.D. student at the Lagrange Laboratory (University of Nice Sophia Antipolis, CNRS, Observatoire de la C\^ote d'Azur). Her research activity is focused on distributed optimization over multitask networks. 

\end{IEEEbiography}

\begin{IEEEbiography}[{\includegraphics[width=1in,height=1.25in,trim = 50mm 50mm 50mm 50mm, clip,keepaspectratio]{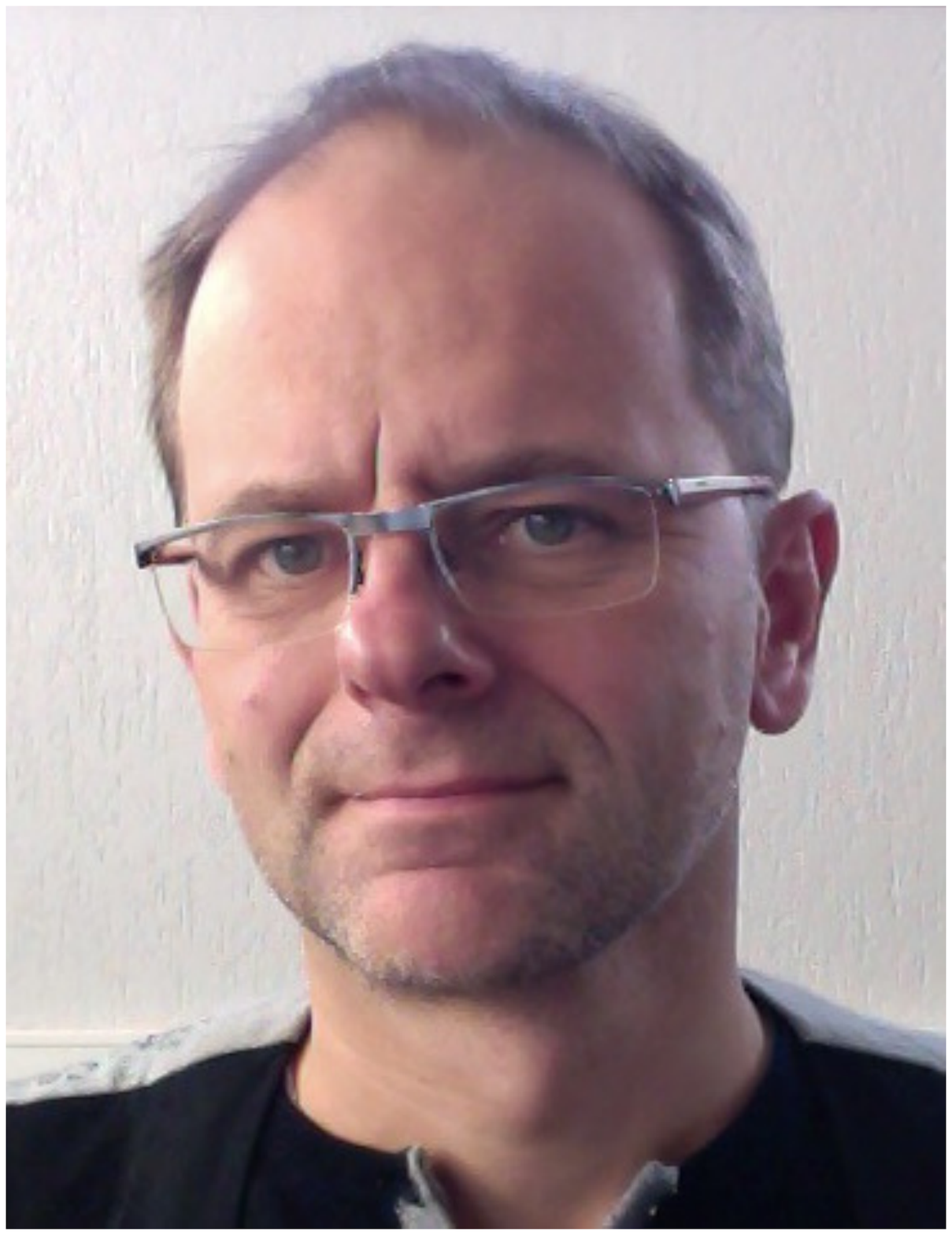}}]{C{\'e}dric Richard}
(S'98--M'01--SM'07)  received the Dipl.-Ing. and the M.S. degrees in 1994, and the Ph.D. degree in 1998, from Compi\`egne University of Technology, France, all in electrical and computer engineering. He is a Full Professor at the Universit\'e C\^ote d'Azur, France. He was a junior member of the Institut Universitaire de France in 2010-2015.

His current research interests include statistical signal processing and machine learning. C\'edric Richard is the author of over 250 papers. He was the General Co-Chair of the IEEE SSP Workshop that was held in Nice, France, in 2011. He was the Technical Co-Chair of EUSIPCO 2015 that was held in Nice, France, and of the IEEE CAMSAP Workshop 2015 that was held in Cancun, Mexico. He serves as a Senior Area Editor of the IEEE Transactions on Signal Processing and as an Associate Editor of the IEEE Transactions on Signal and Information Processing over Networks since 2015. He is also an Associate Editor of Signal Processing Elsevier since 2009. C\'edric Richard is member of the Machine Learning for Signal Processing (MLSP TC) Technical Committee, and served as member of the Signal Processing Theory and Methods (SPTM TC) Technical Committee in 2009-2014.
\end{IEEEbiography}

\begin{IEEEbiography}[{\includegraphics[width=1in,height=1.25in,clip,keepaspectratio]{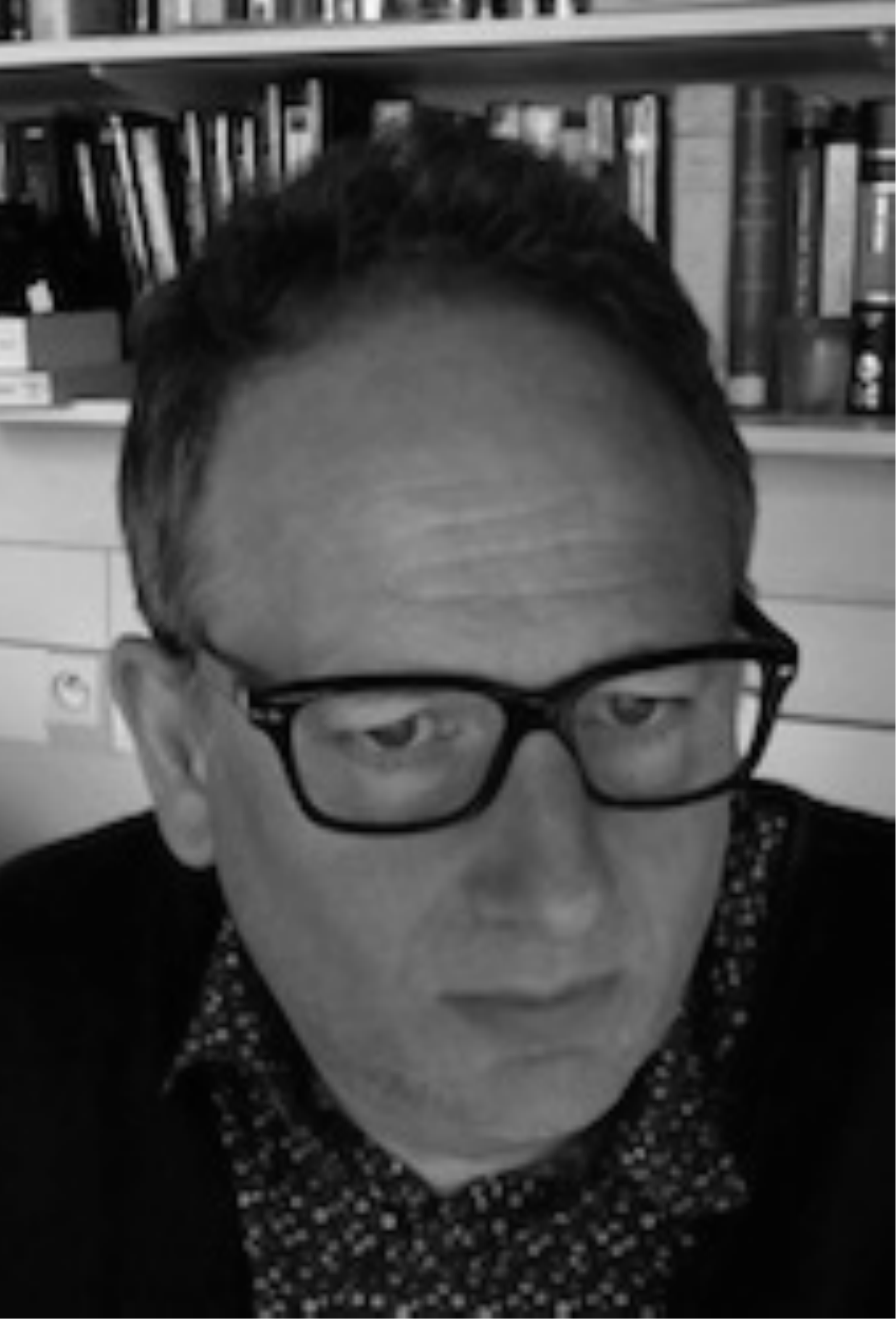}}
]{Andr{\'e} Ferrari}

(SM'91-M'93) received the Ing\'enieur degree from \'Ecole Centrale de Lyon, Lyon, France, in 1988 and the M.Sc. and Ph.D. degrees from the University of Nice Sophia Antipolis (UNS), France, in 1989 and 1992, respectively, all in electrical and computer engineering. 

He is currently a Professor at UNS. He is a member of the Joseph-Louis Lagrange Laboratory (CNRS, OCA), where his research activity is centered around statistical signal processing and modeling, with a particular interest in applications to astrophysics.

\end{IEEEbiography}

\begin{IEEEbiography}[{\includegraphics[width=1in,height=1.25in,clip,keepaspectratio]{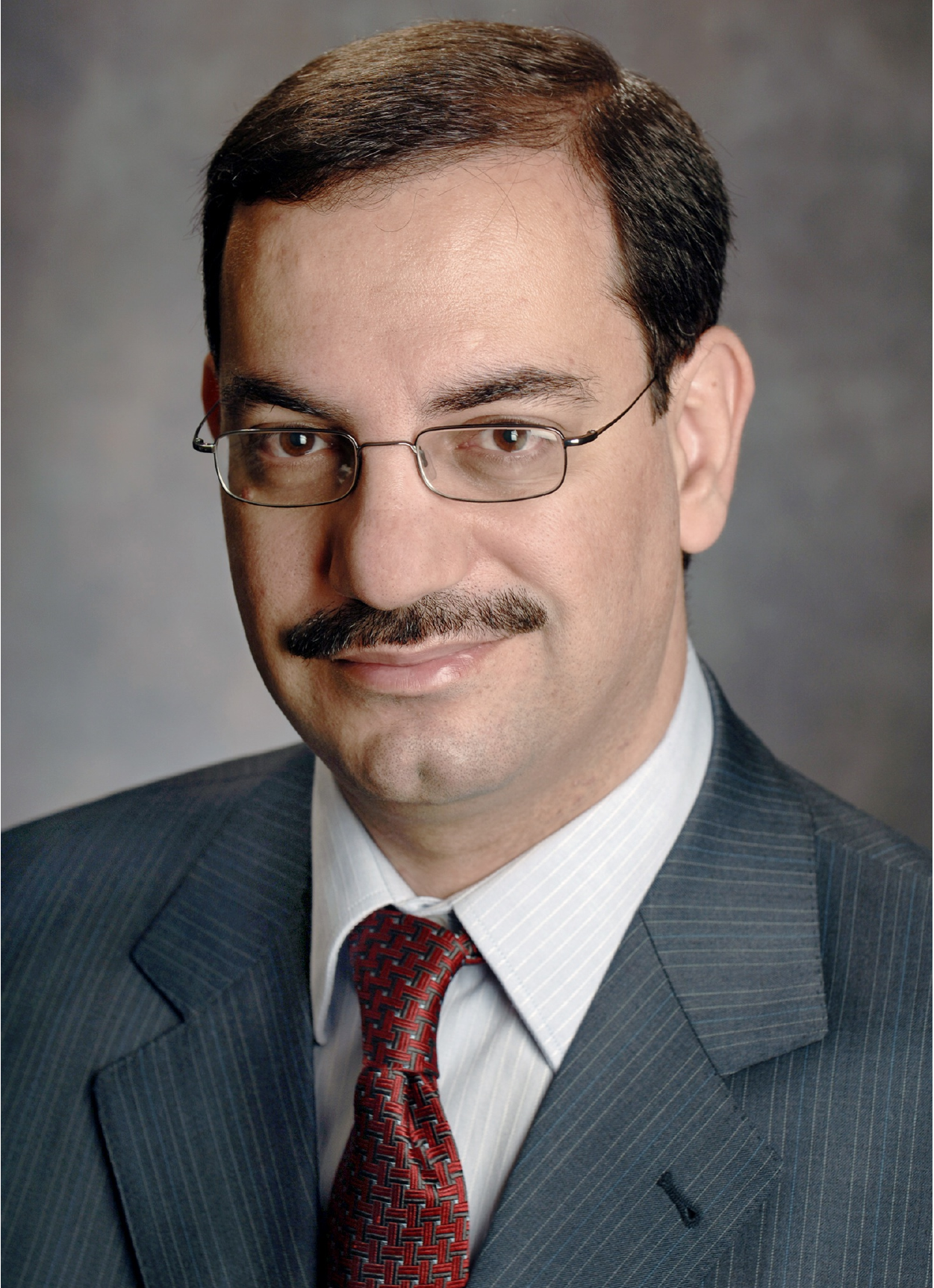}}
]{Ali H. Sayed}

(S'90-M'92-SM'99-F'01) is a Distinguished Professor and past Chairman of electrical engineering at UCLA where he directs the UCLA Adaptive Systems Laboratory (www.ee.ucla.edu/asl). An author or co-author of over 480 scholarly publications and six books, his research involves several areas including adaptation and learning, system theory, statistical signal processing, network science, and information processing theories. His work has been recognized with several awards including the 2015 Education Award from the IEEE Signal Processing Society, the 2014 Papoulis Award from the European Association for Signal Processing, the 2013 Meritorious Service Award and the 2012 Technical Achievement Award from the IEEE Signal Processing Society, the 2005 Terman Award from the American Society for Engineering Education, the 2003 Kuwait Prize, and the 1996 IEEE Donald G. Fink Prize. He has been awarded several Best Paper Awards from 
the IEEE (2002, 2005, 2012, 2014) and EURASIP (2015) and is a Fellow of both IEEE and the American Association for the Advancement of Science (AAAS). He is recognized as a Highly Cited Researcher by Thomson Reuters. He is President-Elect of the IEEE Signal Processing Society (2016-2017).
\end{IEEEbiography}
\end{document}